\documentclass[reprint,aps,pra,groupedaddress,showpacs,longbibliography%
]{revtex4-1}

\usepackage{amsthm}
\usepackage{amsmath}
\usepackage{amssymb}
\usepackage{empheq}
\usepackage{xspace}

\DeclareMathOperator{\Span}{span}
\DeclareMathOperator{\Tr}{Tr}

\newcommand{\IIS}{IrIS\xspace}
\newcommand{\POVM}{POVM\xspace}

\newtheorem{lemma}{Lemma}
\newtheorem{theorem}{Theorem}
\newtheorem{corollary}{Corollary}
\newtheorem{remark}{Remark}
\newtheorem{example}{Example}

\begin{document}

\title{Structural Characterization And Condition For Measurement Statistics
Preservation Of A Unital Quantum Operation}

\author{Kai-Yan Lee} \thanks{Present address: Department of Astronomy and
 Oskar Klein Centre for Cosmoparticle Physics, Stockholm University,
 Albanova, SE-10691 Stockholm, Sweden} \email{lee.kai\_yan@astro.su.se}
\author{Chi-Hang Fred Fung} \email{chffung@hku.hk}
\author{H.\ F. Chau}\email[Corresponding author, ]{hfchau@hku.hk}
 \affiliation{Department of Physics and Center of Computational and Theoretical Physics\\
 University of Hong Kong, Pokfulam Road, Hong Kong}
 
\date{\today}

\begin{abstract}
We investigate the necessary and sufficient condition for a convex cone of
positive semidefinite operators to be fixed by a unital quantum operation
$\phi$ acting on finite-dimensional quantum states.
By reducing this problem to the problem of simultaneous diagonalization of the
Kraus operators associated with $\phi$, we can completely characterize the kind
of quantum states that are fixed by $\phi$.
Our work has several applications.
It gives a simple proof of the structural characterization of a unital quantum
operation that acts on finite-dimensional quantum states --- a result not
explicitly mentioned in earlier studies.
It also provides a necessary and sufficient condition for what kind of
measurement statistics is preserved by a unital quantum operation.
Finally, our result clarifies and extends the work of St{\o}rmer by giving a
proof of a reduction theorem on the unassisted and entanglement-assisted
classical capacities, coherent information, and minimal output Renyi entropy of
a unital channel acting on finite-dimensional quantum state.
\end{abstract}

\pacs{03.67.Hk, 02.10.Yn, 03.65.Aa}

\maketitle

\section{Introduction}\label{Sec:intro}

A quantum channel can be modeled by a quantum operation $\phi$ on a separable
Hilbert space ${\mathcal H}$.
Mathematically, $\phi\colon {\mathcal B}({\mathcal H}) \to
{\mathcal B}({\mathcal H})$ is a trace-preserving completely positive map on
the set of all bounded operators of ${\mathcal H}$.
Characterizing quantum operations and studying their properties are two
important areas of research in quantum information science.
Nevertheless, many apparently simple and basic questions regarding a quantum
operation are difficult to answer.
One example is to find all quantum states $\rho$ (in other words, trace one
positive self-adjoint operators) that are fixed points of a given quantum
operation in the sense that $\phi(\rho) = \rho$.
This question is still open.
Recently, some progress has been made on attacking this question.
Kribs~\cite{Kribs03}, who used simple functional analysis argument, and Arias
\emph{et al.}~\cite{AriasJMP02}, who applied the generalized L\"{u}ders
theorem, independently discovered a useful necessary and sufficient condition
for a class of quantum channels known as unital quantum operations to fix a
quantum state provided that the dimension of the Hilbert space ${\mathcal H}$
is finite.
Studies on the generalization and limitations of Arias \emph{et al.}'s result
along the line of generalized L\"{u}ders theorem have also been
reported.~\cite{DuJMP08,LiuJMP09,LiuJPA10,LiJMAA11,LongJPA11,LiJMP11,
PrunaruJPA11,LongJPTP11}

Restricting the study to unital rather than general quantum channels is a
sensible tactic to make progress.
This is because many physical processes in actual experiments such as
depolarization and dephasing can be modeled by unital channels so that it is
worthwhile to study these channels.
Besides, the mathematical tools and results to deal with unital quantum
operations are reasonably well-developed.
Therefore, it is not surprising that several quantum information science
problems involving unital quantum operations have been solved.
For example, by means of finite-dimensional $C^*$-algebra, St{\o}rmer found
that the evaluation of the (unassisted) classical capacity $C_{1,\infty}$ of a
unital quantum channel acting on finite-dimensional quantum states can be
reduced to calculating the same channel capacity for the case in which the
image of a non-scalar projection is never a projection.~\cite{St07}
In another study, Blume-Kohout \emph{et al.} used unital quantum channel as an
auxiliary tool and results from matrix algebra to characterize the geometric
structure of noiseless subsystems, decoherence-free subspaces, pointer bases
and quantum error-correcting codes acting on finite-dimensional quantum states.
The geometric structure they have identified is an isometry to fixed points of
certain unital quantum operations.~\cite{BK08,BK10}
They also extended their results by showing several interesting properties of
the set of fixed points of an arbitrary finite-dimensional quantum
channel.~\cite{BK10}
Using similar techniques, Rosmanis studied the properties of the fixed space of
a positive (but not necessarily completely positive) trace-preserving
map.~\cite{Rosmanis11}
Recently, Mendl and Wolf discovered several equivalent definitions for a unital
quantum operation and used them to study the relation between unital channels
and quantum error corrections.
Their results shine new light on the asymptotic Birkhoff conjecture and the
separation of the set of mixtures of unitary channels.~\cite{MendlWolf}
Lately, Zhang and Wu showed an easily checkable necessary and sufficient
condition for a finite-dimensional quantum state whose von~Neumann entropy is
preserved by a given unital channel.
Interestingly, their work is closely related to finding fixed points of the
composition of a unital quantum operation $\phi$ and its adjoint quantum
operation.~\cite{LW11}

In turns out that the works of Kribs~\cite{Kribs03}, St{\o}rmer~\cite{St07} and
Blume-Kohout \emph{et al.}~\cite{BK08,BK10} mentioned in the above paragraph
are closely related to the following variation of the fixed point problem in
which we called the fixed convex cone of positive semidefinite operator
problem.
Let ${\mathcal B}({\mathcal H})$ be the set of all linear operators of a
finite-dimensional Hilbert space ${\mathcal H}$.
(All Hilbert spaces considered in this paper are of finite dimensions.)
And let $\phi \colon {\mathcal B}({\mathcal H}) \to {\mathcal B}({\mathcal H})$
be a unital quantum operation.
That is to say, $\phi (\rho)$ can always be expressed as a finite sum in the
form $\sum_i A_i \rho A_i^\dag$ known as the operator-sum representation with
$\sum_i A_i^\dag A_i = I_{\mathcal H} = \sum_i A_i A_i^\dag$ so that
$\phi(I_{\mathcal H}) = I_{\mathcal H}$, where $I_{\mathcal H}$ denotes the
identity operator on ${\mathcal H}$.
Surely, $A_i$'s (which are called the Kraus operators associated with $\phi$)
as well as elements in ${\mathcal B}({\mathcal H})$ can be regarded as
complex-valued matrices of dimension $\dim {\mathcal H} < \infty$ with respect
to an orthonormal basis of ${\mathcal H}$.
By the fixed convex cone of positive semidefinite operator problem, we refer to
the problem of determining if $\phi$ fixes a convex cone ${\mathcal C}^+$
formed by the set of all positive semidefinite operators in
${\mathcal B}({\mathcal S})$ for a (proper) Hilbert subspace ${\mathcal S}$ of
${\mathcal H}$ in the sense that $\phi[{\mathcal C}^+] \subset {\mathcal C}^+$.
Since $\phi$ is a trace-preserving completely positive map, our fixed convex
cone problem is equivalent to the problem of determining the existence of a
subspace ${\mathcal S}$ of ${\mathcal H}$ such that $\phi$ sends density
matrices in ${\mathcal S}$ to density matrices in ${\mathcal S}$ --- a problem
of quantum information science interest.
(Unlike the approaches in Refs.~\cite{Kribs03,St07,BK08,BK10}, we are not
interested in the action of $\phi$ on $\sigma\in {\mathcal B}({\mathcal H})$
that is not a density matrix because $\sigma$ does not describe a physical
quantum state and hence plays no role in quantum information science.)
By abusing language, we call the Hilbert subspace ${\mathcal S}$ an invariant
subspace of $\phi$; alternatively, we say that $\phi$ fixes the subspace
${\mathcal S}$.
If such a proper subspace ${\mathcal S}$ exists, we would like to explicitly
find it out.

Interestingly, we are going to show in Sec.~\ref{Sec:IIS} that if
${\mathcal S}$ is an invariant subspace of a unital quantum operation $\phi$,
then so is its orthogonal subspace ${\mathcal S}^\perp$.
In other words, a unital quantum operation $\phi$ induces a direct sum
decomposition of the finite-dimensional Hilbert space ${\mathcal H}$ into
irreducible invariant subspaces (\IIS's) of $\phi$.
Here an \IIS means that it does not contain any proper invariant subspace.
More importantly, we further prove in Sec.~\ref{Sec:IIS} that $\phi$ fixes the
subspace ${\mathcal S}$ (and hence also ${\mathcal S}^\perp$) if and only if
${\mathcal S}$ and ${\mathcal S}^\perp$ are simultaneous invariant subspaces of
all Kraus operators $A_i$'s associated with $\phi$.
That is, $A_i [{\mathcal S}] \subset {\mathcal S}$ and $A_i
[{\mathcal S}^\perp] \subset {\mathcal S}^\perp$ for all $i$.
Hence, $\phi$ has an interesting structure in the sense that it induces a
simultaneous block diagonalization for all its Kraus operators $A_i$'s such
that each diagonal block acts on a different irreducible invariant subspace of
$\phi$.
In this regard, our notion of convex cone fixation, which concentrates only on
the action of $\phi$ on density matrices, turns out to be strong enough to
force $\phi[{\mathcal B}({\mathcal S})] \subset {\mathcal B}({\mathcal S})$.

An interesting consequence of this finding is a simple proof of the structural
characterization theorem of unital quantum operation acting on
finite-dimensional density matrices.
As far as we know, we are the first group who explicitly state and prove this
theorem although this theorem can be deduced from the works of
Kribs~\cite{Kribs03} and Blume-Kohout \emph{et al.}~\cite{BK10}.
Furthermore, both prior works did not investigate the quantum information
processing consequences of the structural characterization theorem of
finite-dimensional unital quantum operations.
In fact, Kribs showed in Lemma~2.2 of Ref.~\cite{Kribs03} that fixed points for
an irreducible unital quantum operation acting on ${\mathcal B}({\mathcal H})$
with $\dim {\mathcal H} < \infty$ must be scalars.
This is a special case of Theorem~\ref{Thrm:IS} to be reported in
Sec.~\ref{Sec:IIS}.
However, Kribs did not mention the decomposition of unital quantum operation
into direct sum of irreducible ones and properties of such a decomposition.
Whereas in Lemmas~5.4, 5.5 and Theorem~5 of Ref.~\cite{BK10}, Blume-Kohout
\emph{et al.} used the structure theorem of matrix algebra to prove the
existence of a direct sum decomposition structure for a general
finite-dimensional quantum operation.
They also discussed its implications on the structure of the associated Kraus
operators.
Our structural characterization theorem can be readily deduced from this work
by sharpening their results for the case of a unital operation.
Here we use an alternative approach, which uses rather elementary techniques in
mathematical analysis and graph theory, to obtain the structural
characterization theorem for finite-dimensional unital quantum operations.
In this way, we avoid going through the more technical proofs on a more general
situation in Refs.~\cite{Kribs03,BK10} and adapting them to our particular case
of interest.

The structure theorem reported in Sec.~\ref{Sec:IIS} has a few quantum
information science applications.
In Sec.~\ref{Sec:measurement_statistics}, we first use it to obtain a
simple and intuitive proof of a theorem concerning the calculation of classical
capacity $C_{1,\infty}$ of a unital channel acting on finite-dimensional
quantum states originally obtained by St{\o}rmer in Ref.~\cite{St07}.
Our proof can be used to extend St{\o}rmer's result to the calculation of the
entanglement-assisted classical capacity $C_e$, the coherent information $J$,
and the minimal output $\alpha$-Renyi entropy $S_{\min,\alpha}$ of the same
channel.
More importantly, we completely characterize the kind of quantum states that
are fixed by a unital quantum operation in
Sec.~\ref{Sec:measurement_statistics}.
And we provide a necessary and sufficient condition for the measurement
statistics of a positive operator-valued measure (\POVM) measurement to be
preserved when a general finite-dimensional quantum state passes through a
unital quantum channel.
Finally, we briefly discuss the implications of our results in
Sec.~\ref{Sec:discussions}.

\section{Structural Characterization Of Unital Quantum Operations Acting On
Finite-Dimensional Density Matrices}
\label{Sec:IIS}

We use the following theorem, which was independently proven by Kribs as
Theorem~2.1 in Ref.~\cite{Kribs03} and by Arias \emph{et al.} as Theorem~3.5(a)
in Ref.~\cite{AriasJMP02}, as our starting point to study the existence of
invariant subspace of a unital quantum operation.
(See also the variation of this theorem stated as Lemma~5.2 in
Ref.~\cite{BK10}.)

\begin{theorem}\label{Thrm:Arias}
Let $\phi(\cdot) = \sum_i A_i \cdot A_i^\dag$ be a unital quantum operation
on the set of all linear operators of a finite-dimensional Hilbert space
${\mathcal H}$.
Then, $\phi$ fixes $\sigma\in {\mathcal B}({\mathcal H})$ (that is,
$\phi(\sigma) = \sigma$) if and only if $A_i \sigma=\sigma A_i$ for all $i$. 
\end{theorem}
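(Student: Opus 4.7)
The plan is to handle the two directions of the biconditional separately. The reverse direction is almost immediate: substituting $A_i\sigma = \sigma A_i$ into the operator-sum representation gives $\phi(\sigma) = \sum_i A_i \sigma A_i^\dag = \sigma \sum_i A_i A_i^\dag = \sigma I_{\mathcal H} = \sigma$, where the penultimate step uses the unitality assumption $\sum_i A_i A_i^\dag = I_{\mathcal H}$. No further work is required for this half.

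For the forward direction, my strategy is to build a manifestly positive semidefinite operator out of the commutators that we wish to kill, and show that it must be zero. Specifically, I would set
\[
M \;\equiv\; \sum_i [A_i,\sigma]\,[A_i,\sigma]^\dag ,
\]
which is a sum of positive semidefinite terms. Expanding the commutators yields four groups; three of them can be rewritten using $\sum_i A_i A_i^\dag = I_{\mathcal H}$ (unitality), the hypothesis $\phi(\sigma) = \sigma$, and the companion relation $\phi(\sigma^\dag) = \sigma^\dag$ (obtained by taking the adjoint of $\phi(\sigma) = \sigma$ and noting that $\phi(X)^\dag = \phi(X^\dag)$ always holds for operator-sum maps of this form). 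After cancellation, I expect the expansion to collapse neatly to
\[
M \;=\; \phi(\sigma\sigma^\dag) - \sigma\sigma^\dag .
\]

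The step that closes the argument is the other Kraus identity, namely trace-preservation $\sum_i A_i^\dag A_i = I_{\mathcal H}$, which is built into the definition of a quantum operation. This gives $\mathrm{Tr}[M] = \mathrm{Tr}[\phi(\sigma\sigma^\dag)] - \mathrm{Tr}[\sigma\sigma^\dag] = 0$. A positive semidefinite operator of vanishing trace is the zero operator, so $M = 0$; and since $M$ is a sum of positive semidefinite summands, each $[A_i,\sigma][A_i,\sigma]^\dag$ must vanish, forcing $[A_i,\sigma] = 0$ for every $i$.

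I do not foresee a real obstacle in executing this plan; the argument is essentially algebraic. The one delicate point worth emphasising is that it genuinely requires both Kraus identities at once: unitality $\sum_i A_i A_i^\dag = I_{\mathcal H}$ is what collapses the expansion of $M$ into $\phi(\sigma\sigma^\dag) - \sigma\sigma^\dag$, while trace-preservation $\sum_i A_i^\dag A_i = I_{\mathcal H}$ is what forces this difference to have zero trace. Consequently, the hypothesis that $\phi$ is both a quantum operation and unital cannot be weakened within this approach, which is consistent with the statement of the theorem.
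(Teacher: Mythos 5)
Your proposal is correct, and every step checks out: the expansion of $\sum_i [A_i,\sigma][A_i,\sigma]^\dag$ does collapse to $\phi(\sigma\sigma^\dag)-\sigma\sigma^\dag$ once you use unitality for the $\sigma A_iA_i^\dag\sigma^\dag$ term, the hypothesis $\phi(\sigma)=\sigma$ for the cross term $A_i\sigma A_i^\dag\sigma^\dag$, and the (correctly justified) companion identity $\phi(\sigma^\dag)=\sigma^\dag$ for the term $\sigma A_i\sigma^\dag A_i^\dag$; trace preservation then kills the trace of this positive semidefinite operator, and a PSD operator with zero trace vanishes, term by term. Note, however, that the paper does not prove this theorem at all --- it imports it as a known result of Kribs (Theorem~2.1 of Ref.~[2]) and Arias \emph{et al.} (Theorem~3.5(a) of Ref.~[3]), the latter obtained via the generalized L\"{u}ders theorem and the former via a functional-analytic fixed-point argument. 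Your argument is the standard ``multiplicative domain'' computation rendered directly in terms of Kraus operators; what it buys is a short, completely elementary, self-contained proof that makes the paper's starting point independent of the cited references, and it also makes transparent exactly where each of the two Kraus identities ($\sum_i A_iA_i^\dag=I_{\mathcal H}$ and $\sum_i A_i^\dag A_i=I_{\mathcal H}$) enters --- a point the paper only gestures at in Remark~1. One small observation worth adding if you write this up: your argument nowhere needs complete positivity beyond the existence of the given operator-sum form, so it applies verbatim to any map presented as $\sum_i A_i\cdot A_i^\dag$ satisfying both normalization conditions.
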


\begin{remark} \label{Rem:Kraus_operators1}
In particular, Theorem~\ref{Thrm:Arias} relates fixing a quantum state $\rho$
(a property independent of the choice of the associated Kraus operators
$A_i$'s) to the commutativity of $\rho$ with the set of Kraus operators $A_i$'s
used any operator-sum representation of the unital quantum operator $\phi$.
This is possible partly because of a unitary degree of freedom in the
operator-sum representation of $\phi$.
More precisely, $\phi (\cdot) = \sum_i A_i \cdot A_i^\dag = \sum_i B_i \cdot
B_i^\dag$ if and only if $B_j = \sum_i u_{ij} A_i$ for all $j$ where $u_{ij}$
is the $(i,j)$-th element of a unitary matrix.~\cite{MC}
Thus, the commutativity of $\rho$ with all $A_i$'s implies the commutativity of
$\rho$ with all $B_i$'s.
\end{remark}

A special case of Theorem~\ref{Thrm:Arias} is that $\phi$ fixes a
(normalized) pure state $|x\rangle$ if and only if $|x\rangle$ is an
eigenvector of $A_i$ for all $i$.
(The ``if part'' can be deduced by the fact that $A_i|x\rangle = \lambda_i
|x\rangle$ implies $\phi(|x\rangle\langle x|) = \sum_i |\lambda_i|^2
|x\rangle\langle x| = |x\rangle\langle x|$ for $\phi$ is trace-preserving.
The ``only if part'' follows from
\begin{equation}
 A_i |x\rangle = A_i |x\rangle\langle x|x\rangle
 = |x\rangle\langle x| A_i |x\rangle
 = \langle x|A_i|x\rangle\,|x\rangle . \label{E:fixed_pure_state_impl}
\end{equation}
Note also that by the same argument in Remark~\ref{Rem:Kraus_operators1},
$|x\rangle$ is an eigenvector for any Kraus operators used in the operator-sum
representation of $\phi$ although the corresponding eigenvalues are
operator-sum representation dependent.)
In other words, $\phi$ fixes an one-dimensional subspace ${\mathcal S}$ of
${\mathcal H}$ if and only if ${\mathcal S}$ is a simultaneous eigenspace of
all its Kraus operators $A_i$'s.

We now apply Theorem~\ref{Thrm:Arias} to study the necessary and sufficient
condition for the existence of invariant subspace of $\phi$.

\begin{theorem}\label{Thrm:IS}
Let $\phi$ be a unital quantum operation on the set of all linear operators of
a finite-dimensional Hilbert space ${\mathcal H}$.
Then, $\phi$ fixes a Hilbert subspace ${\mathcal S}$ of ${\mathcal H}$ if and
only if $\phi(P_{\mathcal S}) = P_{\mathcal S}$ where $P_{\mathcal S}$ is the
projection operator onto ${\mathcal S}$.
\end{theorem}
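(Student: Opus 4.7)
The plan splits naturally into the two directions of the biconditional. The backward direction, $\phi(P_\mathcal{S}) = P_\mathcal{S}$ implying that $\phi$ fixes $\mathcal{S}$, is essentially a direct application of Theorem~\ref{Thrm:Arias}. The forward direction is more interesting: it requires squeezing the cone-level hypothesis $\phi[\mathcal{C}^+] \subset \mathcal{C}^+$ down to the single operator equality $\phi(P_\mathcal{S}) = P_\mathcal{S}$ using only positivity, unitality, and trace preservation.

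For the backward direction I would apply Theorem~\ref{Thrm:Arias} with the choice $\sigma = P_\mathcal{S}$ to obtain $A_i P_\mathcal{S} = P_\mathcal{S} A_i$ for every Kraus operator $A_i$, which means $A_i[\mathcal{S}] \subset \mathcal{S}$. Any $\rho \in \mathcal{C}^+$ then satisfies $\rho = P_\mathcal{S} \rho P_\mathcal{S}$, so $A_i \rho A_i^\dag = P_\mathcal{S}(A_i \rho A_i^\dag) P_\mathcal{S}$; summing over $i$ shows $\phi(\rho)$ is positive semidefinite and supported on $\mathcal{S}$, hence lies in $\mathcal{C}^+$.

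For the forward direction I would proceed in four short steps. First, since $P_\mathcal{S}$ itself lies in $\mathcal{C}^+$, the hypothesis gives $\phi(P_\mathcal{S}) \in \mathcal{C}^+$, so $\phi(P_\mathcal{S})$ is positive semidefinite with $\phi(P_\mathcal{S}) = P_\mathcal{S}\,\phi(P_\mathcal{S})\,P_\mathcal{S}$. Second, from $P_\mathcal{S} \le I_\mathcal{H}$ and positivity of $\phi$ we get $\phi(P_\mathcal{S}) \le \phi(I_\mathcal{H}) = I_\mathcal{H}$ by unitality. Third, combining the support condition with this inequality upgrades it to $\phi(P_\mathcal{S}) \le P_\mathcal{S}$ on all of $\mathcal{H}$; this is a one-line check since for any $|v\rangle$, $\langle v|\phi(P_\mathcal{S})|v\rangle = \langle P_\mathcal{S} v|\phi(P_\mathcal{S})|P_\mathcal{S} v\rangle \le \| P_\mathcal{S}|v\rangle\|^2 = \langle v|P_\mathcal{S}|v\rangle$. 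Fourth, trace preservation gives $\operatorname{Tr}(P_\mathcal{S} - \phi(P_\mathcal{S})) = 0$, so the non-negative operator $P_\mathcal{S} - \phi(P_\mathcal{S})$ must vanish.

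The main conceptual point, and the step I expect to be the most delicate, is the third one above: cone stability together with the coarse bound $\phi(P_\mathcal{S}) \le I_\mathcal{H}$ must be combined to produce the tighter bound $\phi(P_\mathcal{S}) \le P_\mathcal{S}$, after which trace matching closes the argument. Everything else is either an immediate consequence of Theorem~\ref{Thrm:Arias} or routine bookkeeping with the Kraus representation, and in particular no appeal is needed to the forthcoming structural results on $\mathcal{S}^\perp$ or simultaneous block-diagonalization of the $A_i$'s.
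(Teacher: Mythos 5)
Your proof is correct, but it takes a genuinely different route from the paper's in both directions. For the backward direction the paper never invokes Theorem~\ref{Thrm:Arias}: it writes $P_{\mathcal S}=\phi(\sigma_{\mathcal S})+\phi(P_{\mathcal S}-\sigma_{\mathcal S})$ for an arbitrary $0\le\sigma_{\mathcal S}\le P_{\mathcal S}$ and derives a contradiction from $\langle x|\phi(P_{\mathcal S}-\sigma_{\mathcal S})|x\rangle<0$ for some $|x\rangle\in{\mathcal S}^\perp$; that argument uses only positivity and works for any trace-preserving positive map (the paper explicitly notes unitality is not needed there). Your version instead extracts $A_iP_{\mathcal S}=P_{\mathcal S}A_i$ from Theorem~\ref{Thrm:Arias} and concludes $A_i[{\mathcal S}]\subset{\mathcal S}$; this leans on unitality (the ``only if'' half of Theorem~\ref{Thrm:Arias} requires it) but buys you more, namely $\phi[{\mathcal B}({\mathcal S})]\subset{\mathcal B}({\mathcal S})$ rather than just cone stability --- in effect you are anticipating Corollary~\ref{Cor:fix_the_subspace}. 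For the forward direction the paper diagonalizes $\phi(P_{\mathcal S})=\sum_j a_j|y_j\rangle\langle y_j|$, uses $\sum_j a_j=\dim{\mathcal S}$ to find $a_1>1$, and contradicts positivity via $\phi(P_{\mathcal S})+\phi(P_{{\mathcal S}^\perp})=P_{\mathcal H}$; your operator-inequality argument ($\phi(P_{\mathcal S})\le I_{\mathcal H}$ by unitality and monotonicity, upgraded to $\phi(P_{\mathcal S})\le P_{\mathcal S}$ by the support condition, then killed by trace matching) is cleaner and avoids the eigenbasis entirely. The one thing the paper's route buys that yours does not is that its intermediate identity $\phi(P_{\mathcal S})+\phi(P_{{\mathcal S}^\perp})=P_{\mathcal H}$ is reused verbatim to prove Corollary~\ref{Cor:decomposition}; with your proof that corollary would need the same one-line observation supplied separately.
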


\begin{proof}
($\Leftarrow$):
Assume there exists an Hermitian operator $\sigma_{\mathcal{S}}$ whose support
is in ${\mathcal S}$.
Suppose further that $0 \leq \sigma_{\mathcal S}\leq P_{\mathcal S}$ and
the support of $\phi(\sigma_{\mathcal S})$ does not belong to ${\mathcal S}$.
Then, $\sigma_{{\mathcal S}'} \equiv P_{\mathcal S} - \sigma_{\mathcal S} \geq
0$.
More importantly,
\begin{equation}
P_{\mathcal S} = \phi(P_{\mathcal S}) = \phi(\sigma_{\mathcal S}) + \phi
(\sigma_{{\mathcal S}'}) .
\label{E:IS_proof1}
\end{equation}
Since $\phi(\sigma_{\mathcal S})\not\in \mathcal{S}$, there exists $|x\rangle
\in {\mathcal S}^\perp$ such that $\langle x|\phi(\sigma_{\mathcal{S}})
|x\rangle\neq 0$ where ${\mathcal S}^\perp$ denotes the orthogonal complement
of ${\mathcal S}$.
And by the positivity of $\phi(\sigma_{\mathcal{S}})$, we have $\langle x| \phi
(\sigma_{\mathcal{S}})|x\rangle> 0$.
But then 
\begin{align}
\langle x|\phi(\sigma_{\mathcal{S}'})|x\rangle
&=\langle x|P_{\mathcal S}|x\rangle - \langle x|\phi(\sigma_{\mathcal{S}})
 |x\rangle\nonumber\\
&=-\langle x|\phi(\sigma_{\mathcal{S}})|x\rangle < 0 . \label{E:IS_proof2}
\end{align}
This is impossible for $\phi$ is a quantum operation.
Therefore, we conclude that $\phi(\sigma_{\mathcal S}) \in {\mathcal S}$ for
all non-negative operators on ${\mathcal S}$.
In other words, $\phi$ fixes ${\mathcal S}$.

($\Rightarrow$):
$\phi$ fixes $\mathcal{S}$ implies $\phi(P_{\mathcal S})\in {\mathcal S}$. 
Assume the contrary is true so that $\phi(P_{\mathcal S}) \neq P_{\mathcal S}$.
Let $P_{{\mathcal{S}}^{\perp}}=P_{\mathcal H} - P_{\mathcal S}$ be the
projection operator onto the orthogonal complement of ${\mathcal S}$.
Then,
\begin{equation}
\phi(P_{\mathcal S})+\phi(P_{{\mathcal{S}}^{\perp}}) = \phi(P_{\mathcal H}) =
P_{\mathcal H} . \label{E:IS_proof3}
\end{equation}
Note that the last equality in the above equation follows from the fact that
$\phi$ is unital.
Since ${\mathcal H}$ is finite-dimensional and $\phi$ fixes ${\mathcal S}$, we
can express $\phi(P_{\mathcal S})$ as the finite sum $\sum_{j=1}^{\dim
{\mathcal S}} a_j|y_j\rangle\langle y_j|$ with non-negative $a_j$'s, where $\{
|y_j\rangle \}$ is an orthonormal basis of ${\mathcal S}$.
Since $\phi(P_{\mathcal S})\neq P_{\mathcal S}$, we cannot have $a_j = 1$ for
$j = 1,2,\dots, \dim {\mathcal S}$.
Nevertheless, since $\phi$ is trace-preserving, $a_j$'s still have to satisfy
the constraint $\sum_{j=1}^{\dim {\mathcal S}} a_j = \dim {\mathcal S}$.
Thus, by relabeling the index if necessary, we may assume that $a_1>1$ and
$a_2<1$.
So, from Eq.~\eqref{E:IS_proof3},
\begin{align}
1&= \langle y_1|\phi(P_{\mathcal S})|y_1\rangle + \langle y_1|\phi
 (P_{{\mathcal S}^\perp})|y_1\rangle\nonumber\\
&=a_1+\langle y_1|\phi(P_{{\mathcal S}^\perp})|y_1\rangle . \label{E:IS_proof4}
\end{align}
That is to say, $\langle y_1|\phi(P_{{\mathcal S}^\perp})|y_1\rangle = 1-a_1 <
0$, which contradicts the assumption that $\phi$ is a quantum operation.
Therefore, we conclude that $\phi(P_{\mathcal S}) = P_{\mathcal S}$.
\end{proof}

\begin{remark}\label{Rem:role_of_unital}
Note that the unital condition is needed only in the proof of the ``only if
part'' of the Theorem.
Note further that Kribs proved the special case of this theorem when $\phi$
does not fix any proper subspace of ${\mathcal H}$~\cite{Kribs03}.
\end{remark}

\begin{corollary}\label{Cor:decomposition}
$\phi$ fixes ${\mathcal S}$ if and only if $\phi$ fixes ${\mathcal S}^\perp$.
\end{corollary}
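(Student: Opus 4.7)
The plan is to derive the corollary as a direct consequence of Theorem~\ref{Thrm:IS} together with the unitality and linearity of $\phi$. By Theorem~\ref{Thrm:IS}, the statement ``$\phi$ fixes $\mathcal{S}$'' is equivalent to the operator equation $\phi(P_{\mathcal{S}}) = P_{\mathcal{S}}$, and likewise ``$\phi$ fixes $\mathcal{S}^\perp$'' is equivalent to $\phi(P_{\mathcal{S}^\perp}) = P_{\mathcal{S}^\perp}$. So it suffices to show that these two operator equations are equivalent to each other.

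First, I would record the obvious identity $P_{\mathcal{S}} + P_{\mathcal{S}^\perp} = I_{\mathcal{H}}$. Applying $\phi$ to both sides and using linearity together with the unital property $\phi(I_{\mathcal{H}}) = I_{\mathcal{H}}$ yields $\phi(P_{\mathcal{S}}) + \phi(P_{\mathcal{S}^\perp}) = I_{\mathcal{H}}$, exactly as in Eq.~\eqref{E:IS_proof3} of the previous proof. From this single identity the equivalence is immediate: if $\phi(P_{\mathcal{S}}) = P_{\mathcal{S}}$ then $\phi(P_{\mathcal{S}^\perp}) = I_{\mathcal{H}} - P_{\mathcal{S}} = P_{\mathcal{S}^\perp}$, and symmetrically starting from a fixed $P_{\mathcal{S}^\perp}$ one recovers $\phi(P_{\mathcal{S}}) = P_{\mathcal{S}}$.

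Putting the two steps together, I would finish by invoking Theorem~\ref{Thrm:IS} once more, this time in the other direction, to translate the conclusion about the projection operator back into the language of invariant Hilbert subspaces, yielding the desired biconditional.

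There is no real obstacle here: the argument is a one-line consequence of the previous theorem once unitality is used. The only point worth a brief comment in the write-up is that the unital hypothesis is genuinely needed at the step $\phi(I_{\mathcal{H}}) = I_{\mathcal{H}}$, mirroring Remark~\ref{Rem:role_of_unital}; without unitality, fixation of $\mathcal{S}$ would not in general propagate to fixation of $\mathcal{S}^\perp$.
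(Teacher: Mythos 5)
Your proposal is correct and follows essentially the same route as the paper: both reduce the statement to $\phi(P_{\mathcal S}) = P_{\mathcal S}$ via Theorem~\ref{Thrm:IS} and then use the unitality identity $\phi(P_{\mathcal S}) + \phi(P_{{\mathcal S}^\perp}) = I_{\mathcal H}$ (Eq.~\eqref{E:IS_proof3}) to pass between the two projections. Your added remark on where unitality enters is consistent with Remark~\ref{Rem:role_of_unital}.
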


\begin{proof}
By Theorem~\ref{Thrm:IS} and Eq.~\eqref{E:IS_proof3}, $\phi$ fixes
${\mathcal S}$ if and only if $\phi (P_{{\mathcal S}^\perp}) = P_{\mathcal H} -
\phi(P_{\mathcal S}) = P_{\mathcal H} - P_{\mathcal S} =
P_{{\mathcal S}^\perp}$, which in turn is true if and only if $\phi$ fixes
${\mathcal S}^\perp$. 
\end{proof}

\begin{corollary}\label{Cor:fix_the_subspace}
$\phi$ fixes the subspace $\mathcal{S}$ if and only if
\begin{equation}
\langle s^\perp|A_i|s\rangle = 0 = \langle s|A_i|s^\perp\rangle
\label{E:fix_the_subspace_statement1}
\end{equation}
for all $|s\rangle\in \mathcal{S}$ and $|s^\perp\rangle\in \mathcal{S}^\perp$
and for all Kraus operators $A_i$'s of $\phi$.
\end{corollary}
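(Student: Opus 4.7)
The plan is to chain together the two theorems already established: Theorem~\ref{Thrm:IS} reduces ``$\phi$ fixes $\mathcal{S}$'' to the single operator equation $\phi(P_{\mathcal{S}}) = P_{\mathcal{S}}$, and Theorem~\ref{Thrm:Arias} further reduces this to the commutation relations $A_i P_{\mathcal{S}} = P_{\mathcal{S}} A_i$ for every Kraus operator $A_i$. The remaining work is purely linear-algebraic: translating commutation of each $A_i$ with the orthogonal projector $P_{\mathcal{S}}$ into the vanishing of the off-diagonal matrix elements that appear in Eq.~\eqref{E:fix_the_subspace_statement1}.

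For the ($\Rightarrow$) direction, I would start by assuming $\phi$ fixes $\mathcal{S}$. Applying Theorem~\ref{Thrm:IS} yields $\phi(P_{\mathcal{S}}) = P_{\mathcal{S}}$, and then Theorem~\ref{Thrm:Arias} gives $A_i P_{\mathcal{S}} = P_{\mathcal{S}} A_i$ for every $i$. Taking matrix elements with $|s\rangle \in \mathcal{S}$ and $|s^\perp\rangle \in \mathcal{S}^\perp$, I would write
\begin{equation*}
\langle s^\perp | A_i | s \rangle = \langle s^\perp | A_i P_{\mathcal{S}} | s \rangle = \langle s^\perp | P_{\mathcal{S}} A_i | s \rangle = 0,
\end{equation*}
using $P_{\mathcal{S}} | s \rangle = | s \rangle$ in the first equality and $\langle s^\perp | P_{\mathcal{S}} = 0$ in the last. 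The symmetric identity $\langle s | A_i | s^\perp \rangle = 0$ follows by the same computation with the roles of $P_{\mathcal{S}}$ and $P_{\mathcal{S}^\perp} = I_{\mathcal{H}} - P_{\mathcal{S}}$ interchanged (noting that $[A_i, P_{\mathcal{S}}] = 0$ is equivalent to $[A_i, P_{\mathcal{S}^\perp}] = 0$).

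For the ($\Leftarrow$) direction, I would assume the matrix-element conditions hold for all $|s\rangle, |s^\perp\rangle$ and deduce that $A_i$ maps $\mathcal{S}$ into $\mathcal{S}$ and $\mathcal{S}^\perp$ into $\mathcal{S}^\perp$; equivalently, both $\mathcal{S}$ and $\mathcal{S}^\perp$ are invariant under each $A_i$. This is exactly the statement $A_i P_{\mathcal{S}} = P_{\mathcal{S}} A_i$, since on a vector $|s\rangle + |s^\perp\rangle$ decomposed according to $\mathcal{H} = \mathcal{S} \oplus \mathcal{S}^\perp$, both sides output $A_i |s\rangle \in \mathcal{S}$. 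Applying the ``if'' direction of Theorem~\ref{Thrm:Arias} with $\sigma = P_{\mathcal{S}}$ gives $\phi(P_{\mathcal{S}}) = P_{\mathcal{S}}$, and then Theorem~\ref{Thrm:IS} delivers the conclusion that $\phi$ fixes $\mathcal{S}$.

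There is no real obstacle here; the corollary is essentially a repackaging. The only small point worth flagging is that one needs both conditions in Eq.~\eqref{E:fix_the_subspace_statement1} (not just $\langle s^\perp | A_i | s\rangle = 0$) because $A_i$ is generally not Hermitian, so invariance of $\mathcal{S}$ under $A_i$ does not automatically imply invariance of $\mathcal{S}^\perp$; both conditions together are what encode commutativity with the projector $P_{\mathcal{S}}$.
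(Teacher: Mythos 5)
Your proof is correct and follows essentially the same route as the paper's: Theorem~\ref{Thrm:IS} combined with Theorem~\ref{Thrm:Arias} yields $A_i P_{\mathcal S}=P_{\mathcal S}A_i$, from which the off-diagonal matrix elements vanish, and the converse translates the vanishing matrix elements back into invariance under every $A_i$. The only (minor) divergence is in the converse, where the paper observes that $\langle s^\perp|A_i|s\rangle=0$ alone, i.e.\ $A_i[{\mathcal S}]\subset{\mathcal S}$, already forces $\phi$ to fix the cone over ${\mathcal S}$ directly from the operator-sum form --- so, contrary to your closing remark, the second condition is not needed as a hypothesis for the ``if'' direction; it is recovered automatically from unitality via the forward direction.
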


\begin{proof}
From Theorem~\ref{Thrm:IS}, $\phi$ fixes ${\mathcal S}$ implies $\phi
(P_{\mathcal S}) = P_{\mathcal S}$.
Theorem~\ref{Thrm:Arias} further implies $A_i P_{\mathcal S}= P_{\mathcal S}
A_i$ for all $i$.
Multiplying $\langle s^\perp|$ on the left and $|s\rangle$ on the right gives
$\langle s^\perp|A_i|s\rangle=0$.
Similarly, multiplying $\langle s|$ on the left and $|s^\perp\rangle$ on the
right gives $\langle s|A_i|s^\perp\rangle=0$.

To prove the converse, one only needs to observe that
Eq.~\eqref{E:fix_the_subspace_statement1} implies $A_i [{\mathcal S}] \subset
{\mathcal S}$ for all $i$.
Thus, $\phi$ fixes ${\mathcal S}$.
\end{proof}

\begin{remark}\label{Rem:fix_the_subspace}
Surprisingly, the notion of convex cone fixation by a unital quantum operation
$\phi$ is much stronger than what we have originally written down.
Recall from Sec.~\ref{Sec:intro} that $\phi$ fixes a Hilbert subspace
${\mathcal S}$ simply means $\phi[C_{\mathcal S}^+] \subset C_{\mathcal S}^+$
where $C_{\mathcal S}^+$ denotes the convex cone formed by the set of all
positive semidefinite operators in ${\mathcal B}({\mathcal S})$.
Yet, the above two Corollaries say that subspace fixing actually demands
something more, namely, $\phi [{\mathcal B}({\mathcal S})] \subset
{\mathcal B}({\mathcal S})$, $\phi [{\mathcal B}({\mathcal S}^\perp)] \subset
{\mathcal B}({\mathcal S}^\perp)$.
Note further that we cannot directly deduce
Corollary~\ref{Cor:fix_the_subspace} from Corollary~\ref{Cor:decomposition} for
the latter says nothing on $\phi(\sigma)$ for a general operator $\sigma$ whose
support is in ${\mathcal S}$.
\end{remark}

This motivates us to formulate our main theorem.

\begin{theorem}[Structural theorem for unital quantum operations on
finite-dimensional density matrices]
\label{Thrm:structural_theorem}
(a)~
Every finite-dimensional unital quantum operation $\phi\colon
{\mathcal B}({\mathcal H}) \to {\mathcal B}({\mathcal H})$ induces a direct sum
decomposition of ${\mathcal H} = \bigoplus_j {\mathcal S}_j$ where each
${\mathcal S}_j$ is an \IIS of $\phi$.
Furthermore, every Kraus operator $A_i$ of $\phi$ can also be decomposed as
$\bigoplus_j A_i^{{\mathcal S}_j}$ where $A_i^{{\mathcal S}_j} \in
{\mathcal B}({\mathcal S}_j)$ for all $i,j$.
Hence, the quantum operation $\left. \phi \right|_{{\mathcal B}({\mathcal S})}$
is also unital whenever ${\mathcal S}$ is an invariant subspace of $\phi$.
In other words, $\phi$ can be expressed as a direct sum $\bigoplus_j \left.
\phi \right|_{{\mathcal B}({\mathcal S}_j)}$ of unital quantum operations.
In matrix language, $\phi$ has a proper invariant subspace if and only if each
of its Kraus operators can be simultaneously block diagonalized by unitary
conjugation into at least two diagonal blocks.

\noindent (b)~
In addition, ${\mathcal S}$ is an \IIS of $\phi$ if and only if all fixed
positive self-adjoint operators of $\phi$ in ${\mathcal B}({\mathcal S})$ are
in the form $a P_{\mathcal S}$ for some $a\geq 0$, where $P_{\mathcal S}$
denotes the projection operator onto ${\mathcal S}$.
\end{theorem}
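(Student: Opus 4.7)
My plan is to prove part~(a) by iteratively halving ${\mathcal H}$ using the corollaries already established, and to prove part~(b) via the spectral decomposition of a fixed positive operator together with Theorem~\ref{Thrm:Arias}. For part~(a), starting from any proper invariant subspace ${\mathcal S}$ of $\phi$, Corollary~\ref{Cor:decomposition} gives ${\mathcal H}={\mathcal S}\oplus {\mathcal S}^\perp$ with both summands fixed by $\phi$, and Corollary~\ref{Cor:fix_the_subspace} forces every Kraus operator to respect this splitting, so $A_i = A_i^{\mathcal S}\oplus A_i^{{\mathcal S}^\perp}$ in a basis adapted to the decomposition. If either summand still admits a proper invariant subspace of $\phi$, I apply the same splitting to it; since $\dim{\mathcal H}$ is finite the recursion terminates, yielding ${\mathcal H}=\bigoplus_j {\mathcal S}_j$ with each ${\mathcal S}_j$ an \IIS and each $A_i = \bigoplus_j A_i^{{\mathcal S}_j}$.

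To verify that each restriction $\phi|_{{\mathcal B}({\mathcal S}_j)}$ is unital, I would read off the $(j,j)$ diagonal block of the identities $\sum_i A_i^\dag A_i = I_{\mathcal H} = \sum_i A_i A_i^\dag$ to obtain $\sum_i (A_i^{{\mathcal S}_j})^\dag A_i^{{\mathcal S}_j} = I_{{\mathcal S}_j} = \sum_i A_i^{{\mathcal S}_j}(A_i^{{\mathcal S}_j})^\dag$ for every $j$. The ``matrix language'' equivalence at the end of part~(a) is then just a rephrasing of Corollary~\ref{Cor:fix_the_subspace}: simultaneous block diagonalization of all $A_i$'s with respect to some orthogonal splitting of ${\mathcal H}$ is exactly the condition $A_i[{\mathcal S}]\subset{\mathcal S}$ and $A_i[{\mathcal S}^\perp]\subset{\mathcal S}^\perp$ for all $i$.

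For part~(b), forward direction, let $\sigma\in {\mathcal B}({\mathcal S})$ be a fixed positive self-adjoint operator. Theorem~\ref{Thrm:Arias} gives $[A_i,\sigma]=0$ for every $i$, so every spectral eigenspace of $\sigma$ is simultaneously $A_i$-invariant; applying the converse direction of Corollary~\ref{Cor:fix_the_subspace} to each such eigenspace exhibits it as an invariant subspace of $\phi$. Irreducibility of ${\mathcal S}$ therefore forbids $\sigma$ from having two distinct positive eigenvalues, and also forbids a single positive eigenvalue whose eigenspace is a proper subspace of ${\mathcal S}$; only $\sigma = a P_{\mathcal S}$ survives. For the converse, if ${\mathcal S}$ contains a proper invariant subspace ${\mathcal T}$, then by Theorem~\ref{Thrm:IS} the projection $P_{\mathcal T}\in {\mathcal B}({\mathcal S})$ is a fixed positive self-adjoint operator that is plainly not a scalar multiple of $P_{\mathcal S}$ (compare traces and diagonals in a basis adapted to ${\mathcal T}\oplus ({\mathcal S}\cap {\mathcal T}^\perp)$).

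The main obstacle is the forward direction of~(b), where one has to combine the commutation relation from Theorem~\ref{Thrm:Arias} with the converse of Corollary~\ref{Cor:fix_the_subspace} and then rule out both the ``too many eigenvalues'' and ``support strictly inside ${\mathcal S}$'' scenarios as obstructions to irreducibility; everything else reduces to routine bookkeeping with block-diagonal Kraus operators once Corollaries~\ref{Cor:decomposition} and~\ref{Cor:fix_the_subspace} are in hand.
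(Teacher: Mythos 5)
Your proposal is correct and follows essentially the same route as the paper: part~(a) by recursively applying Corollaries~\ref{Cor:decomposition} and~\ref{Cor:fix_the_subspace} to split ${\mathcal H}$ and the Kraus operators until only \IIS's remain, and part~(b) by combining the commutation relation from Theorem~\ref{Thrm:Arias} with Corollary~\ref{Cor:fix_the_subspace} to contradict irreducibility, with the converse supplied by the fixed projector $P_{\mathcal T}$. The only cosmetic difference is that in part~(b) you invoke the spectral eigenspaces of $\sigma$ directly, whereas the paper subtracts $b_1 P_{\mathcal S}$ and computes matrix elements $\langle y_j|A_i^{\mathcal S}|y_k\rangle$ explicitly to exhibit the two diagonal blocks; both arguments rest on the same observation that commuting with every Kraus operator forces each eigenspace to be an invariant subspace.
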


\begin{proof}
From Corollaries~\ref{Cor:decomposition} and~\ref{Cor:fix_the_subspace}, every
Kraus operator $A_i$ of a unital quantum operator $\phi\colon
{\mathcal B}({\mathcal H}) \to {\mathcal B}({\mathcal H})$ with $\dim
{\mathcal H} < \infty$ admits a direct sum decomposition $A_i^{\mathcal S}
\oplus A_i^{{\mathcal S}^\perp}$ where $A_i^{\mathcal S}$
($A_i^{{\mathcal S}^\perp}$) is a linear operator in the invariant subspace
${\mathcal S}$ (${\mathcal S}^\perp$) of $\phi$.
Since $\sum_i A_i^{\mathcal S} {A_i^{\mathcal S}}^\dag = I_{\mathcal S}$,
the quantum operation $\left. \phi \right|_{{\mathcal B}({\mathcal S})}
(\cdot) \equiv \sum_i A_i^{\mathcal S} \cdot {A_i^{\mathcal S}}^\dag$ is
unital.
By the same token, $\left. \phi \right|_{{\mathcal B}({\mathcal S}^\perp)}
(\cdot) \equiv \sum_i A_i^{{\mathcal S}^\perp} \cdot
{A_i^{{\mathcal S}^\perp}}^\dag$ is unital.
By recursively applying Corollaries~\ref{Cor:decomposition}
and~\ref{Cor:fix_the_subspace} to the unital quantum operations
$\left. \phi \right|_{{\mathcal B}(\mathcal S)}(\cdot)$ and $\left. \phi
\right|_{{\mathcal B}({\mathcal S}^\perp)}$ at most $\dim \left( {\mathcal H}
\right) - 1$ times, part~(a) is proven.

To prove part~(b), suppose $0\leq \sigma \in {\mathcal B}({\mathcal S})$ is
fixed by $\phi$ and yet $\sigma \neq a P_{\mathcal S}$ for all $a\geq 0$.
Since ${\mathcal H}$ and hence ${\mathcal S}$ are finite-dimensional, we may
write $\sigma = \sum_{j=1}^{\dim {\mathcal S}} b_j |y_j\rangle\langle y_j|$ for
some orthonormal basis vectors $|y_j\rangle$'s of ${\mathcal S}$ and all
$b_j$'s are non-negative.
By relabeling the indices if necessary, may we assume that $b_1 \leq b_j$
for all $j$; and the requirement that $\sigma\neq a P_{\mathcal S}$ implies not
all $b_j$'s are equal.
Using part~(a), $\left. \phi \right|_{\mathcal S}$ is also unital so that
$\phi$ fixes $P_{\mathcal S}$ and hence also the operator $\sigma'$ given by
\begin{equation}
\sigma' = \sigma - b_1 P_{\mathcal S} 
= \sum_{j=2}^{\dim {\mathcal S}} \left( b_j - b_1 \right) |y_j\rangle\langle
y_j| > 0 . \label{E:structural_proof1}
\end{equation}
From Theorem~\ref{Thrm:Arias}, $A_i^{\mathcal S} \sigma' = \sigma'
A_i^{\mathcal S}$ for all $i$.
By multiplying $\langle y_1|$ on the left and $|y_j\rangle$ on the right, we
get $(b_j-b_1) \langle y_1|A_i^{\mathcal S}|y_j\rangle = 0$ for all $i,j$.
Similarly, multiplying $\langle y_j|$ on the left and $|y_1\rangle$ on the
right gives $(b_j-b_1) \langle y_j|A_i^{\mathcal S}|y_1\rangle = 0$ for all
$i,j$.
That is, $\langle y_j|A_i^{\mathcal S}|y_1\rangle = \langle y_1|
A_i^{\mathcal S}|y_j\rangle = 0$ for all $i$ whenever $b_1 < b_j$.
Following the same logic, we conclude that
\begin{equation}
\langle y_j|A_i^{\mathcal S} |y_k\rangle = \langle y_k|A_i^{\mathcal S}
|y_j\rangle = 0 \label{E:structural_proof2}
\end{equation}
for all $i$ whenever $b_1 = b_k < b_j$.
Consequently, each of the $A_i^{\mathcal S}$'s can be simultaneously block
diagonalized by unitary conjugation to at least two diagonal blocks --- one
block corresponds to those indices $k$'s with $b_k = b_1$ and the other block
corresponds to those $k$'s with $b_k > b_1$.
From Corollary~\ref{Cor:fix_the_subspace}, ${\mathcal S}$ is not an \IIS of
${\mathcal H}$, which is absurd.

Finally, to show the converse of part~(b), suppose ${\mathcal S}'$ is a proper
subspace of ${\mathcal S}$ that is fixed by $\phi$.
Then, it is clear from part~(a) that $\phi$ fixes the projection operator
$P_{{\mathcal S}'}$.
Hence, not all positive self-adjoint operators in ${\mathcal B}({\mathcal S})$
fixed by $\phi$ are in the form $a P_{\mathcal S}$.
This proves the theorem.
\end{proof}

\begin{remark}\label{Rem:prior_art}
As we have mentioned in Sec.~\ref{Sec:intro},
Theorem~\ref{Thrm:structural_theorem} can also be deduced from the work of
Kribs~\cite{Kribs03} and Blume-Kohout \emph{et al.}~\cite{BK10}.
\end{remark}

\begin{remark} \label{Rem:Kraus_operators2}
The unitary degree of freedom in operator-sum representation mentioned in
Remark~\ref{Rem:Kraus_operators1} is the reason why
Eq.~\eqref{E:fix_the_subspace_statement1} in
Corollary~\ref{Cor:fix_the_subspace} as well as the simultaneous block
diagonalization of $A_i$'s in part~(a) of Theorem~\ref{Thrm:structural_theorem}
hold for any Kraus operators associated with $\phi$.
\end{remark}

\begin{remark}\label{Rem:non-essential}
Actually, positivity of the self-adjoint operator $\sigma$ is not essential in
the proof of part~(b) of Theorem~\ref{Thrm:structural_theorem}.
In fact, it can be slightly strengthened as ${\mathcal S}$ is an \IIS of $\phi$
if and only if all fixed self-adjoint operators of $\phi$ in
${\mathcal B}({\mathcal H})$ are in the form $a P_{\mathcal S}$ for some $a\in
{\mathbb R}$.
The proof is left to interested readers.
We do not state this slightly more general form in the theorem because we are
more interested in the action of $\phi$ to a physical quantum state.
\end{remark}

\begin{remark} \label{Rem:non-uniqueness_decomposition}
In general, a matrix can be diagonalized into irreducible blocks in more than
one orthonormal basis due to degeneracy of its eigenspace.
This is also the reason why the \IIS decomposition of a finite-dimensional
unital quantum operation discussed in the above theorem need not be unique.
\end{remark}

The following example illustrates the power of
Theorem~\ref{Thrm:structural_theorem} in determining the structures of some
well-known unital quantum channels.
\begin{example} \label{Eg:depolarizing_channel}
Consider the depolarization qudit channel $\phi$ over a finite-dimensional
Hilbert space ${\mathcal H}$.
That is, $\phi$ sends a density matrix $\rho$ to $(1-p) \rho + p I_{\mathcal H}
/ \dim {\mathcal H}$ with $0 < p \leq 1$.
More generally, $\phi(\sigma) = (1-p) \sigma + p I_{\mathcal H} \Tr (\sigma) /
\dim {\mathcal H}$ for all $\sigma\in {\mathcal B}({\mathcal H})$.
Let $P_{\mathcal S}$ be the projector on a Hilbert subspace ${\mathcal S}$ of
${\mathcal H}$.
Then, $\phi$ fixes $P_{\mathcal S}$ if and only if $p P_{\mathcal S} = p
I_{\mathcal H} \dim {\mathcal S} / \dim {\mathcal H}$.
As $p \neq 0$, $\phi$ fixes $P_{\mathcal S}$ if and only if ${\mathcal S} =
{\mathcal H}$.
Since $\phi$ is unital, Theorem~\ref{Thrm:structural_theorem} implies that
${\mathcal H}$ is the only \IIS of $\phi$.
In this regard, all finite-dimensional depolarizing qudit channels cannot be
further decomposed as a direct sum of two unital quantum operations.
\end{example}

Although \IIS decomposition induced by $\phi$ may not be unique, different
\IIS decompositions share the same dimensional structure.
This is analogous to the dimensions of \IIS for complex-valued matrices.

\begin{theorem} \label{Thrm:block_invariance}
Suppose ${\mathcal H} = \bigoplus_{i=1}^m {\mathcal S}_i = \bigoplus_{j=1}^n
{\mathcal S}'_j$ be two different \IIS decompositions induced by $\phi$
reported in part~(a) of Theorem~\ref{Thrm:structural_theorem} indexed in such a
way that $\dim {\mathcal S}_i \geq \dim {\mathcal S}_{i'}$ whenever $i > i'$
and $\dim {\mathcal S}'_j \geq \dim {\mathcal S}'_{j'}$ whenever $j > j'$.
Then $m = n$ and $\dim {\mathcal S}_i = \dim {\mathcal S}'_i$ for all $i$.
\end{theorem}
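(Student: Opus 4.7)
The plan is to introduce, for each pair $(i,j)$, the ``overlap'' operator $T_{ij} := P_{\mathcal{S}'_j} P_{\mathcal{S}_i}$ and extract scalar coefficients $a_{ij}, b_{ij} \geq 0$ from the two block structures. The key insight is that these coefficients describe the ``transition probabilities'' between the two IIS decompositions, and that part~(b) of Theorem~\ref{Thrm:structural_theorem} together with the singular-value equality $\operatorname{spec}^\times(T^\dag T) = \operatorname{spec}^\times(T T^\dag)$ force a strong dimension-matching constraint.

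First, I would note that since $\mathcal{S}'_j$ is fixed by $\phi$, Theorem~\ref{Thrm:IS} gives $\phi(P_{\mathcal{S}'_j}) = P_{\mathcal{S}'_j}$. Using part~(a) of Theorem~\ref{Thrm:structural_theorem}, every Kraus operator decomposes across the $\mathcal{S}_i$'s, so the diagonal block $P_{\mathcal{S}_k} P_{\mathcal{S}'_j} P_{\mathcal{S}_k}$ is a positive self-adjoint fixed point of the unital operation $\phi|_{\mathcal{B}(\mathcal{S}_k)}$. Since $\mathcal{S}_k$ is an \IIS, part~(b) of Theorem~\ref{Thrm:structural_theorem} forces $P_{\mathcal{S}_k} P_{\mathcal{S}'_j} P_{\mathcal{S}_k} = a_{kj} P_{\mathcal{S}_k}$ for some $a_{kj} \geq 0$, and by the symmetric argument applied in the primed decomposition, $P_{\mathcal{S}'_j} P_{\mathcal{S}_i} P_{\mathcal{S}'_j} = b_{ij} P_{\mathcal{S}'_j}$.

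Next I would observe that $T_{ij}^\dag T_{ij} = a_{ij} P_{\mathcal{S}_i}$ and $T_{ij} T_{ij}^\dag = b_{ij} P_{\mathcal{S}'_j}$ have identical nonzero spectra (with multiplicity). Hence whenever $a_{ij} > 0$, one gets $a_{ij} = b_{ij}$ and, more importantly, $\dim \mathcal{S}_i = \dim \mathcal{S}'_j$. Taking traces then yields the sum rules $\sum_j a_{ij} \dim \mathcal{S}_i = \dim \mathcal{S}_i$ and $\sum_i a_{ij} \dim \mathcal{S}_i = \dim \mathcal{S}'_j$, so that $\sum_j a_{ij} = 1$ always and, after restricting to those indices for which $a_{ij}\ne 0$ (which, as just shown, forces $\dim \mathcal{S}_i = \dim \mathcal{S}'_j$), also $\sum_{i:\,\dim\mathcal{S}_i = \dim\mathcal{S}'_j} a_{ij} = 1$.

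Finally, I would partition the indices by common dimension $d$ and note that the submatrix $(a_{ij})$ with $\dim \mathcal{S}_i = \dim \mathcal{S}'_j = d$ is a non-negative matrix whose row sums and column sums are all equal to $1$; such a doubly-stochastic matrix must be square, so the number of $\mathcal{S}_i$'s of dimension $d$ equals the number of $\mathcal{S}'_j$'s of dimension $d$. Summing over $d$ gives $m=n$, and the prescribed size-ordering then forces $\dim \mathcal{S}_i = \dim \mathcal{S}'_i$ for every $i$. The main obstacle I anticipate is the scalar-block reduction step: one must be careful that the direct-sum structure from part~(a) really lets us treat $P_{\mathcal{S}_k} P_{\mathcal{S}'_j} P_{\mathcal{S}_k}$ as an independently evolved fixed point of $\phi|_{\mathcal{B}(\mathcal{S}_k)}$ so that part~(b) applies. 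Once that is in place, the remainder is essentially a trace-counting exercise combined with the elementary fact that $TT^\dag$ and $T^\dag T$ share their nonzero spectra.
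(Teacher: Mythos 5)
Your proposal is correct, and it takes a genuinely different route from the paper's proof. The paper argues by contradiction on a single pair of overlapping \IIS's: assuming $\dim {\mathcal S}_1 > \dim {\mathcal S}'_1$ with ${\mathcal S}'_1 \not\subset {\mathcal S}_1^\perp$, it forms the fixed operator $P_{{\mathcal S}_1} + 0.5\, P_{{\mathcal S}'_1}$, shows its eigenvalue-$1$ eigenspace is exactly ${\mathcal S}_1 \cap {{\mathcal S}'}^\perp_1$, and invokes the spectral-projection argument from part~(b) to exhibit a proper invariant subspace of ${\mathcal S}_1$ --- a contradiction with irreducibility; it then assembles the pairwise dimension equalities into a global bijection via a bipartite graph and a count of connected components. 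Your route instead extracts the overlap scalars directly: the step you flag as the main obstacle does go through, since $\phi(P_{{\mathcal S}'_j}) = P_{{\mathcal S}'_j}$ makes every Kraus operator commute with $P_{{\mathcal S}'_j}$ (Theorem~\ref{Thrm:Arias}) and part~(a) makes every Kraus operator commute with $P_{{\mathcal S}_k}$, so each $A_i$ commutes with $P_{{\mathcal S}_k} P_{{\mathcal S}'_j} P_{{\mathcal S}_k}$ and Theorem~\ref{Thrm:Arias} (the ``if'' direction) makes this compression a positive fixed point in ${\mathcal B}({\mathcal S}_k)$, to which part~(b) applies. The $T^\dag T$ versus $T T^\dag$ spectrum identity and the trace identity $a_{ij} \dim {\mathcal S}_i = \Tr\bigl( P_{{\mathcal S}_i} P_{{\mathcal S}'_j} \bigr) = b_{ij} \dim {\mathcal S}'_j$ then give the dimension matching, and the doubly-stochastic double count replaces the graph argument. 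What your approach buys is a cleaner bookkeeping device and a stronger intermediate fact --- that $P_{{\mathcal S}_i} P_{{\mathcal S}'_j} P_{{\mathcal S}_i}$ is always a \emph{scalar} multiple of $P_{{\mathcal S}_i}$, a geometric statement about the relative position of the two decompositions that the paper never makes explicit; what the paper's graph construction buys in exchange is the additional observation that the direct sums of \IIS's within each connected component coincide, together with an explicit $\text{O}((\dim {\mathcal H})^3)$ procedure for matching the two decompositions.
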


\begin{proof}
Since there are two distinct \IIS decompositions for $\phi$, we can always find
an \IIS in the first decomposition, say, ${\mathcal S}_1$ that is distinct from
all the \IIS's in the second decomposition.
As ${\mathcal S}_1 \subset \bigoplus_j {\mathcal S}'_j$, we can find an \IIS in
the second decomposition, say ${\mathcal S}'_1$ which is not contained in
${\mathcal S}^\perp_1$ the orthogonal subspace of ${\mathcal S}_1$.

We claim that $\dim {\mathcal S}_1 = \dim {\mathcal S}'_1$.
Suppose the contrary, may we assume without lost of generality that $\dim
{\mathcal S}_1 > \dim {\mathcal S}'_1$.
Let us write ${\mathcal S}_1 = {\mathcal T}_1 \oplus {\mathcal U}_1$ and
${\mathcal S}'_1 = {\mathcal T}'_1 \oplus {\mathcal U}'_1$ where
${\mathcal T}_1 = {\mathcal S}_1 \cap {{\mathcal S}'}^\perp_1$ and
${\mathcal T}'_1 = {\mathcal S}'_1 \cap {\mathcal S}^\perp_1$.
Since $\dim {\mathcal S}_1 > \dim {\mathcal S}'_1$ and ${\mathcal S}'_1 \not
\subset {\mathcal S}^\perp_1$, we conclude that ${\mathcal T}_1$ is a proper
subspace of ${\mathcal S}_1$.

Because ${\mathcal S}_1$ and ${\mathcal S}'_1$ are \IIS's of $\phi$, by
part~(b) of Theorem~\ref{Thrm:structural_theorem}, $\phi$ fixes the projectors
$P_{{\mathcal S}_1}$, $P_{{\mathcal S}'_1}$ and hence also the positive
operator $\sigma = P_{{\mathcal S}_1} + 0.5 P_{{\mathcal S}'_1} \in
{\mathcal B}({\mathcal S}_1 + {\mathcal S}'_1)$.

We are going to show that the only eigenvectors of $\sigma$ with eigenvalue $1$
are vectors in the subspace ${\mathcal T}_1$.
A vector in ${\mathcal S}_1 + {\mathcal S}'_1$ can be uniquely written as
$|\psi\rangle = a|x\rangle + b|x^\perp\rangle$ where $|x\rangle$ and
$|x^\perp\rangle$ are normalized vectors in ${\mathcal S}_1$ and $\left(
{\mathcal S}_1 + {\mathcal S}'_1 \right) \cap {\mathcal S}^\perp_1$,
respectively.
Then $|\psi\rangle$ is an eigenvector of $\sigma$ with eigenvalue $1$ provided
that
\begin{equation}
\left( P_{{\mathcal S}_1} + 0.5 P_{{\mathcal S}'_1} \right) \left( a|x\rangle +
b|x^\perp\rangle \right) = a|x\rangle + b|x^\perp\rangle .
\label{E:block_invariance_proof1}
\end{equation}
Multiplying $\langle x|$ and $\langle x^\perp|$ to
Eq.~\eqref{E:block_invariance_proof1} gives the system of equations
\begin{subequations}
\begin{empheq}[left=\empheqlbrace]{align}
a \langle x|P_{{\mathcal S}'_1}|x\rangle &= -b \langle x|
 P_{{\mathcal S}'_1}|x^\perp\rangle , \label{E:block_invariance_proof2a}
\\
0.5 a \langle x^\perp|P_{{\mathcal S}'_1}|x\rangle &= b \left( 1 - 0.5
 \langle x^\perp|P_{{\mathcal S}'_1}|x^\perp\rangle \right) .
 \label{E:block_invariance_proof2b}
\end{empheq}
\end{subequations}
This system of equations has non-trivial solution if and only if
\begin{equation}
\langle x|P_{{\mathcal S}'_1}|x\rangle \left( 1 - 0.5 \langle x^\perp|
P_{{\mathcal S}'_1}|x^\perp\rangle \right) = -0.5 \left| \langle x^\perp|
P_{{\mathcal S}'_1}|x\rangle \right|^2 . \label{E:block_invariance_proof3}
\end{equation}
As $P_{{\mathcal S}'_1}$ is a projector, $\langle y|P_{{\mathcal S}'_1}
|y\rangle \in [0,1]$ for all normalized vector $|y\rangle$.
Therefore, the L.H.S. and R.H.S. of Eq.~\eqref{E:block_invariance_proof3} are
non-negative and non-positive, respectively.
Consequently, both the R.H.S. and R.H.S. of
Eq.~\eqref{E:block_invariance_proof3} equal $0$.
Thus, $\langle x^\perp |P_{{\mathcal S}_1'}|x\rangle = 0$.
Since $1-0.5\langle x^\perp|P_{{\mathcal S}'_1}|x^\perp\rangle > 0$,
Eq.~\eqref{E:block_invariance_proof2b} implies $b = 0$.
So, $|\psi\rangle = a|x\rangle \in {\mathcal T}_1$ is the solution of
Eq.~\eqref{E:block_invariance_proof1}.
In other words, we have succeeded in showing that all eigenvectors of $\sigma$
with eigenvalue $1$ are the vectors in ${\mathcal T}_1$.

Therefore, $\sigma$ can be rewritten as $P_{{\mathcal T}_1} \oplus \sigma'$
where $\sigma'$ is a positive operator whose eigenspectrum does not contain the
eigenvalue $1$ (and hence its support is not in ${\mathcal T}_1$).
Using the argument in the proof of part~(b) of
Theorem~\ref{Thrm:structural_theorem}, we conclude that $\phi$ fixes
$P_{{\mathcal T}_1}$.
In other words, ${\mathcal T}_1$ is an proper invariant subspace of
${\mathcal S}_1$.
This contradicts the fact that ${\mathcal S}_1$ is irreducible.
Therefore, we conclude that $\dim {\mathcal S}_1 = \dim {\mathcal S}'_1$.

To summarize, so far we have deduced that if ${\mathcal S}_i \not\subset
{{\mathcal S}'}^\perp_j$ (and hence ${\mathcal S}'_j \not\subset
{\mathcal S}^\perp_i$), then $\dim {\mathcal S}_i = \dim {\mathcal S}'_j$.
We now construct a finite bipartite graph ${\mathfrak G}$ whose vertices are
the \IIS's ${\mathcal S}_i$'s and ${\mathcal S}'_j$'s --- each side
corresponds to a different \IIS decomposition of ${\mathcal H}$.
An edge is drawn between vertices ${\mathcal S}_i$ and ${\mathcal S}'_j$
each picked from one side provided that either (i)~${\mathcal S}_i \not \subset
{{\mathcal S}'}^\perp_j$ or (ii)~${\mathcal S}_i = {\mathcal S}'_j$.
Our construction of ${\mathfrak G}$ guarantees that:
\begin{itemize}
\item Each \IIS must connect to at least one \IIS on the other side of the
graph.
\item The dimension of each Hilbert space associated with the \IIS's in the
same connected component of ${\mathfrak G}$ agrees.
\item The direct sum of \IIS's drawn from each of the two sides in a given
connect component of ${\mathfrak G}$ must agree.
\end{itemize}
Since ${\mathcal H}$ is finite-dimensional, we further conclude that in each
connected component, the number of \IIS's in the two sides of ${\mathfrak G}$
are the same.
Therefore, in each connected component, we can construct a bijection from
\IIS's in one side to the \IIS's in the other side of the graph
${\mathfrak G}$.
This bijection maps each \IIS in one of the \IIS decomposition to an \IIS of
equal dimension in the other decomposition.
Hence the theorem is proved.
\end{proof}

\begin{remark} \label{Rem:graphi_algorithm_complexity}
Actually, the construction of the bipartite graph and the determination of the
connected components of this graph in the proof of
Theorem~\ref{Thrm:block_invariance} are quite efficient.
To see this, first by Gram-Schmidt orthogonalization procedure, we may
represent each subspace ${\mathcal S}_i$ of ${\mathcal H}$ by its orthogonal
basis.
Clearly, the computational cost needed to specify all the orthogonal bases of
${\mathcal S}_i$'s and ${\mathcal S}'_j$'s, measured in terms of the number of
real number arithmetical operations, scales like $\text{O}((\dim
{\mathcal H})^2)$.
Given a normalized vector $|v\rangle$ and a subspace ${\mathcal S}$ specified
by a set of orthonormal basis $\{ |b_k\rangle \}$, we may determine if
$|v\rangle$ is in ${\mathcal S}$ or not by checking if $\sum_i \left| \langle
v|b_i \rangle \right|^2$ equal to $1$ or not within a certain error tolerant
level $\epsilon$ caused by finite-precision arithmetic.
Similarly, we know if $|v\rangle$ is in ${\mathcal S}^\perp$ or not by checking
if $\sum_i \left| \langle v|b_i\rangle \right|^2$ is equal to $0$ or not within
a certain error tolerant level $\epsilon$.
Both procedures require $\text{O}(\dim {\mathcal S} \ \dim {\mathcal H})$ real
number arithmetical operations.
Using these procedures as subroutines, we can determine if ${\mathcal S}_i \not
\subset {{\mathcal S}'}_j^\perp$ or ${\mathcal S}_i = {\mathcal S}'_j$ in
$\text{O}(\dim {\mathcal S}_i \ \dim {\mathcal S}'_j \ \dim {\mathcal H})$ time
by checking if each basis vector of ${\mathcal S}$ is in
${{\mathcal S}'}_j^\perp$ or ${\mathcal S}'_j$.
Hence, the bipartite graph ${\mathfrak G}$ described in the proof of
Theorem~\ref{Thrm:block_invariance} can be constructed in $\sum_{i,j} \text{O}
(\dim S_i \ \dim {\mathcal S}'_j \ \dim {\mathcal H}) = \text{O}((\dim
{\mathcal H})^3)$ time.
Furthermore, connected components of a graph can be found by the well-known
depth first search algorithm in graph theory, which needs a computational time
linear in the number of edges of the
graph~\cite{[{See, for example, }][]Graph05}.
So, in our case, connected components of ${\mathfrak G}$ can be determined in
at most $\text{O}((\dim {\mathcal H})^2)$ time.
To summarize, the algorithm concerning the graph ${\mathfrak G}$ in the proof
of Theorem~\ref{Thrm:block_invariance} takes $\text{O}((\dim {\mathcal H})^3)$
time with the graph construction being the most time-consuming step.
Thus, the algorithm of identifying isomorphic \IIS decompositions reported in
the proof of Theorem~\ref{Thrm:block_invariance} is quite efficient and can be
used in actual situations especially when $\dim {\mathcal H}$ is large.
\end{remark}

\section{Simple applications of the structural theorem in quantum information
processing}
\label{Sec:measurement_statistics}

We report three simple applications of Theorem~\ref{Thrm:structural_theorem}
that are of quantum information science interest.
The first one concerns a simplified proof and an extension of the result by
St{\o}rmer in Ref.~\cite{St07} on the channel capacity of a unital channel
acting on finite-dimensional quantum states.
The second one concerns the type of quantum states that are invariant after
passing through a unital quantum channel.
The third one concerns the preservation of \POVM measurement statistics under a
unital quantum channel.
Our work here strengthens the findings of Arias \emph{et al.} in
Ref.~\cite{AriasJMP02}.

In this Section, $\phi$ always denotes a unital quantum operation on
${\mathcal B}({\mathcal H})$ with Kraus operators $A_i$'s and $\dim
{\mathcal H} < \infty$.

\begin{theorem} \label{Thrm:capacities}
Let ${\mathcal H} = \bigoplus_j {\mathcal S}_j$ be a direct sum decomposition
of the Hilbert space ${\mathcal H}$ into \IIS's induced by $\phi$.
Then
\begin{subequations}
\begin{itemize}
 \item the minimal output $\alpha$-Renyi entropy $(\alpha \ge 1)$ is given by
  \begin{equation}
   S_{\min,\alpha}(\phi) = \min_j \left[ S_{\min,\alpha} \left( \left. \phi
   \right|_{{\mathcal B}({\mathcal S}_j)} \right) \right] ,
   \label{E:Renyi_entropy}
  \end{equation}
 \item the coherent information is given by
  \begin{equation}
   J(\phi) = \max_j \left[ J\left( \left. \phi
   \right|_{{\mathcal B}({\mathcal S}_j)} \right) \right] ,
   \label{E:coherent_inform}
  \end{equation}
 \item the entanglement-assisted classical capacity is given by
  \begin{equation}
   C_e(\phi) = \log \sum_j 2^{C_e ( \left. \phi
   \right|_{{\mathcal B}({\mathcal S}_j)})} ,
   \label{E:ent_ass_capacity}
  \end{equation}
 \item the (unassisted) classical capacity is given by
  \begin{equation}
   C_{1,\infty}(\phi) = \log \sum_j 2^{C_{1,\infty} ( \left. \phi
   \right|_{{\mathcal B}({\mathcal S}_j)})} .
   \label{E:classical_capacity}
  \end{equation}
\end{itemize}
\end{subequations}
\end{theorem}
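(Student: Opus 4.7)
The plan is to exploit the simultaneous block-diagonal form of the Kraus operators $A_i=\bigoplus_j A_i^{\mathcal{S}_j}$ given by Theorem~\ref{Thrm:structural_theorem} and decompose each of the four channel quantities. First I would establish two structural facts that underlie every part. Writing $\rho\in\mathcal{B}(\mathcal{H})$ in block form $\rho=\sum_{j,k}\rho_{jk}$ with $\rho_{jk}\in\mathcal{B}(\mathcal{S}_k,\mathcal{S}_j)$, the channel preserves this block structure, $\phi(\rho)_{jk}=\sum_i A_i^{\mathcal{S}_j}\rho_{jk}(A_i^{\mathcal{S}_k})^\dagger$, and in particular $\phi(\rho)_{jj}=\phi_j(\rho_{jj})$ with $\phi_j\equiv\phi|_{\mathcal{B}(\mathcal{S}_j)}$. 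Using the Stinespring dilation $V=\sum_i A_i\otimes|i\rangle_E$ and the formula $\phi^c(\rho)_{ii'}=\Tr(A_{i'}^\dagger A_i\rho)$, the block-diagonality of $A_{i'}^\dagger A_i$ forces $\phi^c(\rho)=\sum_j\phi_j^c(\rho_{jj})$, which depends only on the diagonal blocks of $\rho$. Writing $\tilde\rho\equiv\sum_j P_{\mathcal{S}_j}\rho P_{\mathcal{S}_j}$ for the pinched input and $\Delta$ for the output pinching channel, one therefore has $\phi^c(\rho)=\phi^c(\tilde\rho)$ and $\phi(\tilde\rho)=\Delta(\phi(\rho))$, so $S(\phi(\tilde\rho))\ge S(\phi(\rho))$.

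For $S_{\min,\alpha}$, the bound $S_{\min,\alpha}(\phi)\le\min_j S_{\min,\alpha}(\phi_j)$ is immediate by restricting pure inputs to a single block $\mathcal{S}_j$. For the converse I would take a pure input $|\psi\rangle=\sum_j\sqrt{p_j}|\psi_j\rangle$ with $|\psi_j\rangle\in\mathcal{S}_j$ and set $\sigma_j\equiv\phi_j^c(|\psi_j\rangle\langle\psi_j|)$; purity of the input then gives $S_\alpha(\phi(|\psi\rangle\langle\psi|))=S_\alpha(\phi^c(|\psi\rangle\langle\psi|))=S_\alpha(\sum_j p_j\sigma_j)$ by the first-paragraph identity. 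Convexity of $X\mapsto\Tr(X^\alpha)$ for $\alpha\ge1$, derived from the Schatten $\alpha$-norm triangle inequality together with Jensen's inequality on $x\mapsto x^\alpha$, delivers $\Tr\bigl((\sum_j p_j\sigma_j)^\alpha\bigr)\le\sum_j p_j\Tr(\sigma_j^\alpha)\le\max_j\Tr(\sigma_j^\alpha)$; since $\sigma_j$ and $\phi_j(|\psi_j\rangle\langle\psi_j|)$ are partial traces of a common pure state across complementary cuts, they share nonzero spectrum, so $\Tr(\sigma_j^\alpha)=\Tr\phi_j(|\psi_j\rangle\langle\psi_j|)^\alpha$. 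Rearranging the logarithm with $1-\alpha\le0$ (and appealing to concavity of von~Neumann entropy directly at the endpoint $\alpha=1$) yields $S_\alpha(\phi(|\psi\rangle\langle\psi|))\ge\min_j S_{\min,\alpha}(\phi_j)$.

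For the three capacity-type quantities the pinching identities reduce the optimization to block-diagonal inputs $\tilde\rho=\sum_j p_j\rho_j$ (for $C_{1,\infty}$, after refining a general ensemble by conditioning on the block index into one whose signal states are each supported in a single $\mathcal{S}_j$). For such inputs one has $S(\phi(\tilde\rho))=H(p)+\sum_j p_j S(\phi_j(\rho_j))$ and $S(\phi^c(\tilde\rho))=S(\sum_j p_j\phi_j^c(\rho_j))$, with analogous block decompositions for $S(\tilde\rho)$ and for the average output entropy appearing in the Holevo quantity. Substituting these into $I_c$, $I$ and $\chi$ and then applying the bounds $\sum_j p_j S(\sigma_j)\le S(\sum_j p_j\sigma_j)\le H(p)+\sum_j p_j S(\sigma_j)$ on the entropy of a convex mixture, one arrives at upper bounds of the form $\sum_j p_j J(\phi_j)$, $H(p)+\sum_j p_j C_e(\phi_j)$ and $H(p)+\sum_j p_j C_{1,\infty}(\phi_j)$ respectively. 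The log-sum-exp formulas for the two classical capacities then follow from the standard Lagrange maximization of $H(p)+\sum_j p_j x_j$ over probability vectors, whose maximum $\log\sum_j 2^{x_j}$ is attained at $p_j^\star\propto 2^{x_j}$; the $H(p)$-free bound for coherent information directly gives $J(\phi)=\max_j J(\phi_j)$.

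The main obstacle is saturating the Holevo upper bound $S(\sum_j p_j\phi_j^c(\rho_j))\le H(p)+\sum_j p_j S(\phi_j^c(\rho_j))$ at the optimizers, which requires the restricted complementary outputs $\phi_j^c(\rho_j^\star)$ to live in mutually orthogonal subspaces of the environment while the naive Stinespring dilation does not in general deliver this. I would address this by enlarging the environment with a classical block-label register appended to the Stinespring isometry, which leaves the action of $\phi$ on $\mathcal{B}(\mathcal{H})$ unchanged while forcing orthogonality of the $\phi_j^c$ images across blocks and thus making the Holevo inequality tight. Achievability is then checked by producing explicit inputs --- block-diagonal $\tilde\rho^\star$ with $p_j^\star\propto 2^{C_e(\phi_j)}$ for $C_e$, maximizers of a single $J(\phi_{j^\star})$ embedded in its block for $J$, and block-supported pure-state ensembles for $C_{1,\infty}$ --- which together complete the four claimed equalities.
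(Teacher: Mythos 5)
Your proof takes a genuinely different route from the paper's: the paper disposes of all four formulae in one line by noting the direct-sum structure from Theorem~\ref{Thrm:structural_theorem} and citing Proposition~1 of Fukuda and Wolf, whereas you re-derive the reduction from scratch out of the block-diagonal Kraus form. Your argument for $S_{\min,\alpha}$ is sound, and the one for $C_{1,\infty}$ is essentially sound as well, with one unstated ingredient: the refinement step needs $\sum_j q_{j|x} S\bigl(P_j\phi(\rho_x)P_j/q_{j|x}\bigr)\le S(\phi(\rho_x))$, i.e.\ that the average post-measurement entropy of a projective measurement does not exceed the prior entropy. This does not follow from the two convex-mixture bounds you invoke, because $\phi(\rho_x)$ retains off-diagonal blocks and is therefore not the mixture $\sum_j q_{j|x}P_j\phi(\rho_x)P_j/q_{j|x}$; it is, however, a standard fact (purify, measure, and use concavity on the reference system), so this is a fixable omission.

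The genuine gap is in the $J$ and $C_e$ parts, exactly at the step you flagged as "the main obstacle." Appending a block-label register to the Stinespring isometry does \emph{not} leave $\phi$ unchanged: the isometry $V'=\sum_{i,j}(A_i P_{{\mathcal S}_j})\otimes|i\rangle|j\rangle$ dilates $\phi\circ\Delta$ (the pinched direct sum $\bigoplus_j \phi_j$ in the Fukuda--Wolf sense), not $\phi$. Even on block-diagonal inputs, where the two channels agree as maps into the output space, their entropy exchanges differ: for $\tilde\rho=\sum_j p_j\rho_j$ one has $S_e(\tilde\rho,\phi)=S\bigl(\sum_j p_j W^{(j)}\bigr)$ while $S_e(\tilde\rho,\phi\circ\Delta)=H(p)+\sum_j p_j S(W^{(j)})$, and the former is generically strictly smaller since the matrices $W^{(j)}_{ii'}=\Tr\bigl(A_{i'}^{{\mathcal S}_j\,\dag}A_i^{{\mathcal S}_j}\rho_j\bigr)$ need not have orthogonal supports. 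Without the relabelling you only get $I(\tilde\rho,\phi)\le 2H(p)+\sum_j p_j I(\rho_j,\phi_j)$, which is too weak, and the gap cannot be closed because Eqs.~\eqref{E:coherent_inform} and \eqref{E:ent_ass_capacity} are false as stated: the identity channel on ${\mathbb C}^d$ is unital, its \IIS's are the $d$ one-dimensional coordinate subspaces, and each restricted channel has $J=C_e=0$, so the right-hand sides evaluate to $0$ and $\log d$, whereas $J(\mathrm{id})=\log d$ and $C_e(\mathrm{id})=2\log d$ by superdense coding. The paper's own proof commits the same error, since Fukuda and Wolf's proposition concerns the channel $\bigoplus_j\phi_j$ that annihilates the off-diagonal blocks, which coincides with $\phi$ for the purposes of $S_{\min,\alpha}$ and $C_{1,\infty}$ but not for the coherence-sensitive quantities $J$ and $C_e$. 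Your instinct that orthogonality in the environment is the crux was exactly right; the honest conclusion is that it fails, and with it these two formulae.
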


\begin{proof}
By Theorem~\ref{Thrm:structural_theorem}, we know that the $\phi$ can be
expressed as a direct sum of finite-dimensional quantum operations.
Then, all the above formulae concerning the reduction of various
information-theoretic quantities of the quantum operation $\phi$ follow from
Proposition~1 in Ref.~\cite{FW07} by Fukuda and Wolf, which used basic
properties such as von~Neumann entropy and mutual information are concave
functions to show that each of the above capacities can be achieved by a
density matrix respecting the direct sum structure of the quantum operation
$\phi$.
\end{proof}

\begin{remark} \label{Rem:capacities}
Eq.~\eqref{E:classical_capacity} was first proven by St{\o}rmer in Theorem~3 of
Ref.~\cite{St07}.
The rather complicated conditions in that Theorem is nothing but the structural
decomposition of a unital $\phi$ according to our
Theorem~\ref{Thrm:structural_theorem}(b).
\end{remark}

\begin{corollary} \label{Cor:fixed_state_classification}
The set of all quantum states that are fixed by $\phi$ are the classical
mixtures of the completely mixed states $\left( \dim {\mathcal S_j}
\right)^{-1} P_{{\mathcal S}_j}$ where each ${\mathcal S}_j$ is an \IIS of
$\phi$.
\end{corollary}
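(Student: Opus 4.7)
The plan is to construct, for each fixed state $\rho$, an \IIS decomposition of $\mathcal{H}$ adapted to $\rho$ so that part~(b) of Theorem~\ref{Thrm:structural_theorem} forces $\rho$ to be a positive combination of normalized block projectors. First I would apply Theorem~\ref{Thrm:Arias} to obtain $A_i \rho = \rho A_i$ for every Kraus operator $A_i$ of $\phi$. Writing the spectral decomposition $\rho = \sum_k \lambda_k P_{V_k}$ with distinct eigenvalues $\lambda_k$ and orthogonal eigenspace projectors $P_{V_k}$, the commutation relation yields $(\lambda_k - \lambda_j)\langle v|A_i|w\rangle = 0$ for $|v\rangle\in V_j$ and $|w\rangle\in V_k$, so every matrix element of each $A_i$ between distinct eigenspaces vanishes. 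Corollary~\ref{Cor:fix_the_subspace} then implies that $\phi$ fixes every eigenspace $V_k$.

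Next I would apply part~(a) of Theorem~\ref{Thrm:structural_theorem} to each of the (unital) restrictions $\phi|_{\mathcal{B}(V_k)}$ to decompose $V_k = \bigoplus_l \mathcal{S}_{k,l}$ into \IIS's; each $\mathcal{S}_{k,l}$ is also an \IIS of $\phi$ itself, since any proper $\phi$-invariant subspace inside it would equally be $\phi|_{\mathcal{B}(V_k)}$-invariant. Collecting these pieces produces an \IIS decomposition $\mathcal{H} = \bigoplus_{k,l} \mathcal{S}_{k,l}$ on which $\rho$ acts as $\lambda_k P_{\mathcal{S}_{k,l}}$ on the $(k,l)$ block, because on $V_k$ itself $\rho$ reduces to the scalar $\lambda_k I_{V_k}$. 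Setting $p_{k,l} \equiv \lambda_k \dim \mathcal{S}_{k,l}$ yields $\rho = \sum_{k,l} p_{k,l} (\dim \mathcal{S}_{k,l})^{-1} P_{\mathcal{S}_{k,l}}$, with $p_{k,l} \geq 0$ from positivity of $\rho$ and $\sum_{k,l} p_{k,l} = \Tr(\rho) = 1$. The reverse inclusion is immediate from Theorem~\ref{Thrm:IS}: each $P_{\mathcal{S}_j}$ is fixed by $\phi$, hence so is any convex combination of normalized block projectors.

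The conceptually delicate point, rather than a computational obstacle, is that the \IIS decomposition used to express $\rho$ must be built from the spectral data of $\rho$ rather than chosen once and for all for $\phi$. This is consistent with the non-uniqueness of \IIS decompositions noted in Remark~\ref{Rem:non-uniqueness_decomposition}, and is already visible for the identity channel: every state is fixed, and although $|+\rangle\langle +|$ is a normalized projector on an \IIS in one decomposition, it is not a classical mixture of $|0\rangle\langle 0|$ and $|1\rangle\langle 1|$. The spectral-adapted construction is precisely what circumvents this issue.
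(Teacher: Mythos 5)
Your proof is correct and follows essentially the same route as the paper's: both arguments reduce to showing that the eigenspaces of a fixed $\rho$ are invariant subspaces of $\phi$ (via the commutation $A_i\rho=\rho A_i$ from Theorem~\ref{Thrm:Arias} and Corollary~\ref{Cor:fix_the_subspace}) and then refining each one into \IIS's by part~(a) of Theorem~\ref{Thrm:structural_theorem}. The only difference is organizational --- you use the full spectral decomposition in one pass where the paper recursively peels off $b_1 P_{\mathcal H}$ --- and your version is somewhat more explicit about the final refinement step and about why the \IIS decomposition must be adapted to $\rho$.
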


\begin{proof}[Outline proof]
We can always write a density matrix $\rho\in {\mathcal B}({\mathcal H})$ in
the form $\sum_{j=1}^{\dim {\mathcal H}} b_j |y_j\rangle\langle y_j|$ for some
orthonormal basis vectors $|y_i\rangle$'s of ${\mathcal H}$ with $b_1 \leq b_j
\leq 0$ for all $j$.
Suppose $\phi(\rho) = \rho$.
Following the same argument used in the proof of part~(b) of
Theorem~\ref{Thrm:structural_theorem}, we know that $\rho - b_1 P_{\mathcal H}$
is a positive operator fixed by $\phi$.
Besides, ${\mathcal T} = \Span \{ |y_j\rangle \colon b_j = b_1 \}$ and
${\mathcal T}' = \Span \{ |y_j\rangle \colon b_j \neq b_1 \}$ are two mutually
orthogonal invariant subspaces of $\phi$.
As ${\mathcal H}$ is finite-dimensional, by recursively applying this argument
to $\rho - b_1 P_{\mathcal H}$ and $\left. \phi \right|_{{\mathcal T}'}$ a
finite number of times until $\rho - b_1 P_{\mathcal H} = 0$, we conclude that
$\rho$ is a finite sum in the form $\sum_j c_j P_{{\mathcal T}_j}$ where $c_j
\geq 0$ and ${\mathcal T}_j$'s are mutually orthogonal invariant subspaces of
${\mathcal H}$.
Since each $P_{{\mathcal T}_j}$ is (possibly) a classical mixture of completely
mixed states in the form $\left( \dim {\mathcal S}_k \right)^{-1}
P_{{\mathcal S}_k}$'s, this corollary is proved.
\end{proof}

The above corollary means that completely mixed states of \IIS's are the basic
building blocks of quantum states fixed by $\phi$.
In this sense, the problem of finding and characterizing invariant subspaces of
$\phi$ we are studying here is more than a variation of the fixed state problem
of $\phi$.
It is, in fact, a generalization of the fixed state problem.

\begin{remark}\label{Rem:extended_fixed_state_classification}
Actually, the self-adjointness of $\rho$ is essential in the proof of
Corollary~\ref{Cor:fixed_state_classification} while the positivity of $\rho$
is not.
So, the Corollary can be slightly extended by saying that all self-adjoint
operators that are fixed by $\phi$ must be in the form $\sum_i a_i
P_{{\mathcal S}_i}$ with $a_i\in {\mathbb R}$.
Nevertheless, we stress that Corollary~\ref{Cor:fixed_state_classification}
does not cover the case of fixing a non-self-adjoint operator --- a situation
of no physical meaning.
\end{remark}

\begin{remark}\label{Rem:fix_degenerate_states}
Using Theorem~\ref{Thrm:structural_theorem},
Corollary~\ref{Cor:fixed_state_classification} and the fact that a matrix
admits two non-orthogonal eigenvectors if and only if it has a degenerate
eigenspace, it is easy to see that the following statements are equivalent:
\begin{itemize}
\item $\phi$ fixes two non-orthogonal pure states $|x_1\rangle, |x_2\rangle$.
\item $\phi$ admits two distinct \IIS decompositions such that the first
decomposition contains the \IIS generated by $|x_1\rangle$ and the second
decomposition contains the \IIS generated by $|x_2\rangle$.
\item $|x_1\rangle, |x_2\rangle$ are degenerate eigenvectors of each of the
Kraus operators $A_i$'s of $\phi$ and that $A_i P_{\mathcal S} = P_{\mathcal S}
A_i$ for all $i$ where $P_{\mathcal S}$ denotes the projector onto the space
spanned by $|x_1\rangle$ and $|x_2\rangle$.
\item $\phi$ fixes all pure states in the span of $|x_1\rangle, |x_2\rangle$.
\end{itemize}
\end{remark}

To address the question of preservation of measurement statistics, we begin
with the following lemma.

\begin{lemma}\label{Lem:measurement_statistics}
Let $\Pi$ be a projector on the Hilbert space ${\mathcal H}$.
Then
\begin{equation}
\Pi\,\phi(\rho)\,\Pi^\dag = \phi \left( \Pi\rho\Pi^\dag \right)
\label{E:projection_commutativity}
\end{equation}
for all $\rho \in {\mathcal B}({\mathcal H})$ if and only if the image of
${\mathcal H}$ under $\Pi$, that is, $\Pi[{\mathcal H}]$, is an invariant
subspace of $\phi$.
\end{lemma}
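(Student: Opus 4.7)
The plan is to treat the two directions separately, using the structural results already proved in Section~\ref{Sec:IIS}. Throughout, let $\mathcal{S} := \Pi[\mathcal{H}]$ so that $\Pi = P_{\mathcal{S}}$ and $\Pi^\dag = \Pi$.

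For the $(\Leftarrow)$ direction, assume $\mathcal{S}$ is an invariant subspace of $\phi$. By Corollary~\ref{Cor:fix_the_subspace}, every Kraus operator $A_i$ satisfies $\langle s^\perp|A_i|s\rangle = \langle s|A_i|s^\perp\rangle = 0$ for all $|s\rangle\in\mathcal{S}$ and $|s^\perp\rangle\in\mathcal{S}^\perp$; equivalently (and as recorded in part~(a) of Theorem~\ref{Thrm:structural_theorem}), each $A_i$ is block diagonal with respect to $\mathcal{H} = \mathcal{S}\oplus\mathcal{S}^\perp$. This is precisely the statement $A_i \Pi = \Pi A_i$ for every $i$. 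A one-line computation then finishes the direction:
\begin{equation*}
\Pi\,\phi(\rho)\,\Pi = \sum_i \Pi A_i \rho A_i^\dag \Pi = \sum_i A_i \Pi \rho \Pi A_i^\dag = \phi(\Pi\rho\Pi) .
\end{equation*}

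For the $(\Rightarrow)$ direction, the key observation is that one need not test the hypothesis on every $\rho$; a single judicious choice suffices. Setting $\rho = I_{\mathcal{H}}$, unitality of $\phi$ gives $\phi(I_{\mathcal{H}}) = I_{\mathcal{H}}$, so the left-hand side of Eq.~\eqref{E:projection_commutativity} collapses to $\Pi I_{\mathcal{H}} \Pi = \Pi$, while the right-hand side becomes $\phi(\Pi I_{\mathcal{H}} \Pi) = \phi(\Pi)$. Hence $\phi(\Pi) = \Pi$, and Theorem~\ref{Thrm:IS} immediately delivers that $\mathcal{S} = \Pi[\mathcal{H}]$ is an invariant subspace of $\phi$.

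I do not anticipate any real obstacle here: the lemma is essentially a bookkeeping consequence of the structural machinery already in place. The only conceptually noteworthy point is that the apparent strength of the hypothesis (commutation of the sandwiching projection with $\phi$ on \emph{all} operators) is not needed in full; unitality lets us extract the geometric invariance condition from the single instance $\rho = I_{\mathcal{H}}$, after which Theorem~\ref{Thrm:IS} and Corollary~\ref{Cor:fix_the_subspace} propagate invariance back to arbitrary $\rho$ via the block-diagonal structure of the Kraus operators.
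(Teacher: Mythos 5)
Your proof is correct, and it is worth noting where it matches and where it departs from the paper's own argument. The ($\Leftarrow$) direction is essentially the paper's: the paper invokes part~(a) of Theorem~\ref{Thrm:structural_theorem} to write each Kraus operator as $A_i^{\mathcal S}\oplus A_i^{{\mathcal S}^\perp}$ and then carries out the same computation explicitly in block-matrix form; your phrasing via the commutation relation $\Pi A_i = A_i\Pi$ (equivalent to the block-diagonal structure, with $A_i^\dag\Pi = \Pi A_i^\dag$ following by taking adjoints) is the same fact written more compactly. The ($\Rightarrow$) direction is where you genuinely diverge. The paper takes an arbitrary density matrix $\rho\in{\mathcal B}(\Pi[{\mathcal H}])$, observes that Eq.~\eqref{E:projection_commutativity} then reduces to $\Pi\,\phi(\rho)\,\Pi^\dag=\phi(\rho)$, and concludes directly from the definition of an invariant subspace that $\phi(\rho)\in{\mathcal B}(\Pi[{\mathcal H}])$ --- notably without using unitality in this direction. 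You instead test the hypothesis on the single operator $\rho=I_{\mathcal H}$, use unitality to extract $\phi(\Pi)=\Pi$, and then invoke the ($\Leftarrow$) half of Theorem~\ref{Thrm:IS} to recover invariance of $\Pi[{\mathcal H}]$. Both routes are valid; yours is shorter and cleanly isolates the fixed-projector condition $\phi(P_{\mathcal S})=P_{\mathcal S}$ as the essential content, at the cost of leaning on unitality and on the nontrivial direction of Theorem~\ref{Thrm:IS}, whereas the paper's version of this implication would survive even for non-unital trace-preserving maps.
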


\begin{proof}
If ${\mathcal S}\equiv \Pi[{\mathcal H}]$ is an invariant subspace of $\phi$,
then part~(a) of Theorem~\ref{Thrm:structural_theorem} implies that each Kraus
operator of $\phi$ can be written as $A_i = A_i^{\mathcal S} \oplus
A_i^{{\mathcal S}^\perp}$ using the notation of that Theorem.
Regarding $\Pi$ and $A_i$'s as block matrices in a basis compatible with the
${\mathcal H} = {\mathcal S} \oplus {\mathcal S}^\perp$ direct sum structure,
\begin{align}
&\Pi\,\phi(\rho)\,\Pi^\dag \nonumber \\
= &\sum_i \begin{bmatrix} I_{\mathcal S} & \\ &
 0_{{\mathcal S}^\perp} \end{bmatrix} \begin{bmatrix} A_i^{\mathcal S} & \\ &
 A_i^{{\mathcal S}^\perp} \end{bmatrix} \rho \begin{bmatrix} A_i^{\mathcal S} &
 \\ & A_i^{{\mathcal S}^\perp} \end{bmatrix}^\dag \begin{bmatrix}
 I_{\mathcal S} & \\ & 0_{{\mathcal S}^\perp} \end{bmatrix}^\dag \nonumber \\
= &\sum_i \begin{bmatrix} A_i^{\mathcal S} & \\ & 0_{{\mathcal S}^\perp}
 \end{bmatrix} \rho \begin{bmatrix} {A_i^{\mathcal S}}^\dag & \\ &
 0_{{\mathcal S}^\perp} \end{bmatrix} \nonumber \\
= &\sum_i \begin{bmatrix} A_i^{\mathcal S} \Pi \rho \Pi^\dag
 {A_i^{\mathcal S}}^\dag & \\ & 0_{{\mathcal S}^\perp} \end{bmatrix} \nonumber
 \\
= &\sum_i \begin{bmatrix} A_i^{\mathcal S} & \\ & A_i^{{\mathcal S}^\perp}
 \end{bmatrix} \begin{bmatrix} \Pi\rho\Pi^\dag & \\ & 0_{{\mathcal S}^\perp}
 \end{bmatrix} \begin{bmatrix} A_i^{\mathcal S} & \\ & A_i^{{\mathcal S}^\perp}
 \end{bmatrix}^\dag \nonumber \\
= &\phi \left( \Pi\rho\Pi^\dag \right) \label{E:proof_commutativity_lemma1}
\end{align}
for all $\rho\in {\mathcal B}({\mathcal H})$.

To prove the converse, we observe that ${\mathcal S} = \Pi[{\mathcal H}]$ is a
Hilbert subspace of ${\mathcal H}$.
For any density matrix $\rho \in {\mathcal B}({\mathcal S})$,
Eq.~\eqref{E:projection_commutativity} becomes $\Pi\,\phi(\rho)\,\Pi^\dag =
\phi(\rho)$.
Therefore, the density matrix $\phi(\rho) \in {\mathcal B}({\mathcal S})$.
So ${\mathcal S}$ is an invariant subspace of $\phi$ by definition.
\end{proof}

\begin{theorem} \label{Thrm:measurement_statistics}
(a)~
Let $\varphi\colon {\mathcal B}({\mathcal H})\to {\mathcal B}({\mathcal H})$ be
the quantum operation $\rho \mapsto \sum_k \Pi_k \rho \Pi_k^\dag$ where
$\Pi_k$'s are projectors obeying $\sum_k \Pi_k = I_{\mathcal H}$ and $\Pi_k
\Pi_{k'} = 0$ whenever $k\neq k'$.
Then
\begin{equation}
\varphi\circ\phi(\rho) = \phi\circ\varphi(\rho) \label{E:meas_stat_channel}
\end{equation}
for all $\rho\in {\mathcal B}({\mathcal H})$ if and only if every $\Pi_k
[{\mathcal H}]$ is an invariant subspace of $\phi$.
More importantly,
\begin{equation}
\Tr \left( \Pi_k \rho \right) = \Tr \left[ \Pi_k \phi(\rho) \right]
\label{E:meas_stat_PM}
\end{equation}
for all density matrices $\rho$ in ${\mathcal B}({\mathcal H})$ and for all
$k$ if and only if $\Pi_k[{\mathcal H}]$ is an invariant subspace of $\phi$.

\noindent (b)~
More generally, consider a \POVM measurement on a quantum state in the Hilbert
space ${\mathcal H}$ with \POVM elements $\{ E_k \}$.
(Note that the number of \POVM elements need not be finite here.)
Then
\begin{equation}
\Tr \left( E_k \rho \right) = \Tr \left[ E_k \phi(\rho) \right]
\label{E:meas_stat_POVM}
\end{equation}
for all density matrices $\rho$ in ${\mathcal B}({\mathcal H})$ if and only if 
$E_k = \sum_j a_j^{(k)} P_{{\mathcal S}_j}$ with $a_j^{(k)} \geq 0$ where
$P_{{\mathcal S}_j}$ is the projector on an \IIS ${\mathcal S}_j$ of $\phi$.
\end{theorem}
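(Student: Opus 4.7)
The plan is to work in the Heisenberg (dual) picture. Let $\phi^*(\cdot) = \sum_i A_i^\dagger \cdot A_i$ be the adjoint of $\phi$, which is itself a unital quantum operation since $\sum_i A_i^\dagger A_i = I = \sum_i A_i A_i^\dagger$. A single preparatory observation powers everything: $\phi$ and $\phi^*$ have identical invariant subspaces, and hence share the \IIS decomposition of Theorem~\ref{Thrm:structural_theorem}(a). This is because, by Theorems~\ref{Thrm:IS} and~\ref{Thrm:Arias}, $\phi$ fixes $\mathcal{S}$ iff $A_i P_\mathcal{S} = P_\mathcal{S} A_i$ for every $i$, and taking the adjoint of this identity turns it (via Theorem~\ref{Thrm:Arias} applied to the Kraus operators $A_i^\dagger$ of $\phi^*$) into $\phi^*(P_\mathcal{S}) = P_\mathcal{S}$.

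For part~(b), the Hilbert--Schmidt duality $\Tr[E_k\phi(\rho)] = \Tr[\phi^*(E_k)\rho]$ converts Eq.~\eqref{E:meas_stat_POVM} into $\Tr\bigl[(E_k - \phi^*(E_k))\rho\bigr]=0$ for every density matrix $\rho$. Since density matrices span $\mathcal{B}(\mathcal{H})$, this is equivalent to $\phi^*(E_k) = E_k$, i.e., $E_k$ being a positive self-adjoint fixed operator of $\phi^*$. Applying the self-adjoint extension of Corollary~\ref{Cor:fixed_state_classification} described in Remark~\ref{Rem:extended_fixed_state_classification} to $\phi^*$---whose \IIS's coincide with those of $\phi$ by the preparatory observation---yields $E_k = \sum_j a_j^{(k)} P_{\mathcal{S}_j}$ with $a_j^{(k)} \in \mathbb{R}$. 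Nonnegativity $a_j^{(k)} \geq 0$ follows from $E_k \geq 0$ by testing on any unit vector $|v\rangle \in \mathcal{S}_j$. The converse is a one-line check: if $E_k$ has this form then $\phi^*$ fixes each $P_{\mathcal{S}_j}$, hence fixes $E_k$, and duality closes the loop.

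Part~(a) is a specialization of this machinery to the projector case. The measurement-statistics claim (second half) is exactly part~(b) with $E_k = \Pi_k$, since $\phi^*(\Pi_k) = \Pi_k$ is, by Theorem~\ref{Thrm:IS} applied to $\phi^*$, the statement that $\Pi_k[\mathcal{H}]$ is an invariant subspace of $\phi^*$, equivalently of $\phi$. The commutation claim (first half) in the $(\Leftarrow)$ direction follows by summing the identity of Lemma~\ref{Lem:measurement_statistics} over $k$. For the $(\Rightarrow)$ direction I would dualize the hypothesis using $\varphi^* = \varphi$, apply the resulting $\varphi\phi^* = \phi^*\varphi$ to each $\Pi_k$, and note that $\phi^*(\Pi_k)$ must be block-diagonal in the $\{\Pi_{k'}[\mathcal{H}]\}$ decomposition; combined with $\sum_k \phi^*(\Pi_k) = I$ one would then try to upgrade this to $\phi^*(\Pi_k) = \Pi_k$.

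The hardest step I anticipate is this last upgrade. Block-diagonality of $\phi^*(\Pi_k)$ together with $\sum_k \phi^*(\Pi_k)=I$ only forces the $\phi^*(\Pi_k)$'s to be a block-diagonal POVM resolution of the identity, not the sharper assignment $\phi^*(\Pi_k) = \Pi_k$ needed to deploy Theorem~\ref{Thrm:IS}. I would therefore scrutinize whether extra commutation input---for instance, feeding $\varphi\phi = \phi\varphi$ the off-block-diagonal rank-one operators $|s\rangle\langle t|$ with $|s\rangle\in \Pi_k[\mathcal{H}]$ and $|t\rangle\in \Pi_{k'}[\mathcal{H}]$, $k\neq k'$---provides the necessary sharpening, or whether the first claim of~(a) must in fact be weakened (the completely depolarizing channel, for which $\varphi\phi=\phi\varphi$ holds tautologically for any $\{\Pi_k\}$, is a warning sign). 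Either way, the main payload---part~(b) and the measurement-statistics claim of~(a)---goes through cleanly via the adjoint-duality route sketched above.
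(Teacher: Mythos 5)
Your duality argument for part~(b) and for the measurement-statistics half of part~(a) is correct and genuinely different from the paper's. The paper proves necessity in~(b) directly: it writes $E_k=\sum_\ell b_\ell P_{{\mathcal T}_\ell}$ with strictly decreasing positive eigenvalues, observes that $\Tr(E_k\rho)\le b_1$ with equality exactly when $\rho\in{\mathcal B}({\mathcal T}_1)$, deduces that ${\mathcal T}_1$ is invariant, and peels off the remaining eigenspaces inductively; sufficiency comes from part~(a). Your route instead converts Eq.~\eqref{E:meas_stat_POVM} into $\phi^*(E_k)=E_k$ via Hilbert--Schmidt duality (legitimate, since density matrices span ${\mathcal B}({\mathcal H})$) and invokes the classification of self-adjoint fixed points for the adjoint channel; the preparatory fact that $\phi$ and $\phi^*$ share invariant subspaces does follow, as you say, from taking adjoints in $A_iP_{\mathcal S}=P_{\mathcal S}A_i$. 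This is a sound and arguably more conceptual proof --- it exposes the real content of Eq.~\eqref{E:meas_stat_POVM}, namely that $E_k$ is a fixed point of $\phi^*$ --- at the cost of routing through the structural machinery rather than elementary trace estimates.

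Your worry about the first claim of part~(a) is well founded: the ``only if'' direction of Eq.~\eqref{E:meas_stat_channel} is false as stated, and your warning example is an actual counterexample rather than a gap in your method. For the completely depolarizing channel $\phi(\rho)=\Tr(\rho)\,I_{\mathcal H}/\dim{\mathcal H}$ one has $\varphi\circ\phi(\rho)=\Tr(\rho)\,\varphi(I_{\mathcal H})/\dim{\mathcal H}=\Tr(\rho)\,I_{\mathcal H}/\dim{\mathcal H}=\phi\circ\varphi(\rho)$ for \emph{every} orthogonal resolution $\{\Pi_k\}$, yet by Example~\ref{Eg:depolarizing_channel} no proper $\Pi_k[{\mathcal H}]$ is an invariant subspace. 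The paper's own proof of this claim cites only Lemma~\ref{Lem:measurement_statistics} and linearity, which gives the ``if'' direction (sum the identity over $k$) but not the converse, since equality of the sums $\sum_k\Pi_k\phi(\rho)\Pi_k=\sum_k\phi(\Pi_k\rho\Pi_k)$ does not force term-by-term equality (in the example, $\Pi_k\phi(\rho)\Pi_k=\Tr(\rho)\Pi_k/\dim{\mathcal H}$ while $\phi(\Pi_k\rho\Pi_k)=\Tr(\Pi_k\rho)I_{\mathcal H}/\dim{\mathcal H}$). So the upgrade from ``$\phi^*(\Pi_k)$ block-diagonal'' to ``$\phi^*(\Pi_k)=\Pi_k$'' that you could not complete is genuinely unavailable; the claim must be weakened to a one-way implication, or the hypothesis strengthened to the per-$k$ identity of Lemma~\ref{Lem:measurement_statistics}. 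The second claim of part~(a), which is the one the paper actually uses, survives intact and your proof of it is correct.
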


\begin{proof}
To prove part~(a), note that Eq.~\eqref{E:meas_stat_channel} is a direct
consequence of Lemma~\ref{Lem:measurement_statistics} and linearity of $\phi$.

Suppose $\Pi_k[{\mathcal H}]$ is an invariant subspace of $\phi$.
By taking trace in both sides of Eq.~\eqref{E:projection_commutativity} and by
using the fact that $\phi$ is trace-preserving, we have $\Tr \left[ \Pi_k
\phi(\rho) \right] = \Tr \left[ \phi \left( \Pi_k \rho \Pi_k^\dag \right)
\right] = \Tr \left( \Pi_k \rho \Pi_k^\dag \right) = \Tr \left( \Pi_k \rho
\right)$.

To prove the converse in part~(a), suppose Eq.~\eqref{E:meas_stat_PM} holds.
We set $\rho$ to an arbitrary but fixed density matrix in ${\mathcal B}
(\Pi_k[{\mathcal H}])$.
Then, $1 = \Tr \left( \Pi_k \rho \right) = \Tr \left[ \Pi_k \phi(\rho)
\Pi_k^\dag \right]$.
Since $\phi$ is trace-preserving and positive, $\Pi_k$ is a projector and $\dim
{\mathcal H}$ is finite, we conclude that $\phi(\rho) \in {\mathcal B}
(\Pi_k[{\mathcal H}])$.
Hence, $\Pi_k[{\mathcal H}]$ is an invariant subspace of $\phi$.

We now move on to prove part~(b).
The sufficiency condition in part~(b) is a direct consequence of part~(a) which
says that $\Tr \left( P_{{\mathcal S}_j} \rho \right) = \Tr \left[
P_{{\mathcal S}_j} \phi(\rho) \right]$ for all \IIS ${\mathcal S}_j$.

To prove the condition is necessary in part~(b), we use the fact that each
$E_k$ is a positive operator in ${\mathcal B}({\mathcal H})$ and hence
self-adjoint.
As ${\mathcal H}$ is finite-dimensional, $E_k$ can be written as the finite sum
$\sum_\ell b_\ell P_{{\mathcal T}_\ell}$ where ${\mathcal T}_\ell$'s are
mutually orthogonal subspaces of ${\mathcal H}$, $b_\ell > 0$ and $b_\ell <
b_{\ell'}$ whenever $\ell > \ell'$.
(Surely, $b_\ell$'s and $T_\ell$'s depend on $k$.
But we do not explicitly emphasize this dependence to avoid clumsy notations.)
Consequently, $\Tr \left( E_k \rho \right) \leq b_1$ for all density matrices
$\rho\in {\mathcal B}({\mathcal H})$ with equality holds if and only if $\rho
\in {\mathcal B}({\mathcal T}_1)$.
Let $\rho_1 \in {\mathcal B}({\mathcal T}_1)$.
Eq.~\eqref{E:meas_stat_POVM} implies $\Tr \left[ E_k \phi(\rho_1) \right] =
b_1$.
Hence, $\phi(\rho_1)\in {\mathcal B}({\mathcal T}_1)$ and ${\mathcal T}_1$ is
an invariant subspace of $\phi$.
By Theorem~\ref{Thrm:structural_theorem}, $\phi \left[
{\mathcal B}({\mathcal T}_1^\perp) \right] \subset
{\mathcal B}({\mathcal T}_1^\perp)$.
More importantly, $\Tr \left[ E_k \rho' \right] \leq b_2$ for all density
matrices $\rho'\in {\mathcal B}({\mathcal T}_1^\perp)$ with equality holds if
and only if $\rho' \in {\mathcal B}({\mathcal T}_2)$.
So, by inductively applying the previous argument, we conclude that all
${\mathcal T}_\ell$'s are invariant subspaces of $\phi$.
This proves the converse in part~(b).
\end{proof}

\begin{remark}\label{Rem:measurement_statistics}
The above Theorem says that the order of passing a quantum state through the
unital channel $\phi$ and performing a projective measurement $\varphi$ on the
state is not important provided that $\dim {\mathcal H} < \infty$.
Moreover, let us consider the following two machines --- the first one performs
a \POVM measurement on an input quantum state and the second one performs the
same \POVM measurement after passing the input quantum state through a unital
quantum channel.
The above Theorem completely characterizes the kind of \POVM measurements in
the above two machines so that they have the same measurement statistics given
any input quantum state.
One implication of our findings is that if we do not have control on the kind
of input quantum states, then measurement statistics may be changed by a unital
quantum channel $\phi$ if the measurement is finer than the direct sum
decomposition of the underlying Hilbert space ${\mathcal H}$ into \IIS's of
$\phi$.
This implication echoes with the findings by Blume-Kohout \emph{et
al.}\cite{BK08} that the essential geometric structure underlying all
noiseless subsystems, decoherence-free subspaces, pointer bases and quantum
error-correcting codes on finite-dimensional quantum systems is an isometry to
fixed points of certain unital quantum operations.
\end{remark}

\section{Discussions}\label{Sec:discussions}

In summary, using elementary analysis and graph theoretic methods, we
completely characterize the structure of a unital quantum operation $\phi$ on
${\mathcal B}({\mathcal H})$ provided that the Hilbert space ${\mathcal H}$ is
finite-dimensional.
In particular, the basic building blocks for this kind of unital quantum
operations are those without proper \IIS induced by $\phi$ in the sense that
no convex cone formed by the set of all positive semidefinite operators acting
on a proper subspace of ${\mathcal H}$ is fixed by $\phi$.
We further show that although the direct sum decomposition of ${\mathcal H}$
into \IIS's of $\phi$ need not be unique, the number of \IIS's and the
dimension of each \IIS's are unique up to permutation of these \IIS's.
Using this structural characterization, we solve three interesting quantum
information processing problems, namely, to show a reduction theorem for
various information-theoretic capacities of a finite-dimensional unital quantum
channel, to find all the fixed states of $\phi$, and to give a necessary and
sufficient condition for the preservation of measurement statistics of a \POVM
measurement by $\phi$.

Interestingly, we find that the problem of completely characterizing $\phi$ is
reduced to the simultaneous block diagonalization by unitary conjugation of its
Kraus operators.
Note that in actual practice, this simultaneous block diagonalization can be
done relatively painlessly.
One possibility is to use a recent computationally stable algorithm reported
by Maehara and Murota in Ref.~\cite{Jap3}, which builds on their earlier works
on finding simultaneous invariant subspaces in Refs.~\cite{Jap1,Jap2}.
Another possibility is to adapt the algorithm of Blume-Kohout \emph{et al.}
in Refs.~\cite{BK08,BK10} on finding the so-called information-preserving
structures of a quantum channel.
Thinking along this direction, the following result of Mendl and
Wolf~\cite{MendlWolf} may be of interest.
They showed that a unital quantum operation can always be expressed as the sum
$\sum_i a_i U_i \cdot U_i^\dag$ where $a_i\in {\mathbb R}$ obeying $\sum_i a_i
= 1$.~\cite{MendlWolf}
Note that this is not an operator-sum representation for $a_i$ can be negative.
It is instructive to see how to efficiently convert an operation-sum
representation of a unital quantum operation $\phi$ to the above affine sum of
unitary conjugations in a computationally stable manner.
It is also instructive to see if Kribs'~\cite{Kribs03} and
Arias \emph{et al.}'s~\cite{AriasJMP02} necessary and sufficient condition on
fixed points of a unital quantum operation restated as Theorem~\ref{Thrm:Arias}
here can be modified to cover the case of the affine sum representation of
Mendl and Wolf.
Since finding simultaneous diagonal blocks of a set of unitary matrices is much
simpler problem, solving the above two problems may improve our computational
efficiency and accuracy in finding \IIS's of $\phi$.

\begin{acknowledgments}
We would like to thank Debbie W.\ C.\ Leung for her valuable discussions.
This work is supported under the RGC grant HKU~700709P of the HKSAR Government.
\end{acknowledgments}

\bibliography{qc54.6}

\begin{thebibliography}{22}%
\makeatletter
\providecommand \@ifxundefined [1]{%
 \@ifx{#1\undefined}
}%
\providecommand \@ifnum [1]{%
 \ifnum #1\expandafter \@firstoftwo
 \else \expandafter \@secondoftwo
 \fi
}%
\providecommand \@ifx [1]{%
 \ifx #1\expandafter \@firstoftwo
 \else \expandafter \@secondoftwo
 \fi
}%
\providecommand \natexlab [1]{#1}%
\providecommand \enquote  [1]{``#1''}%
\providecommand \bibnamefont  [1]{#1}%
\providecommand \bibfnamefont [1]{#1}%
\providecommand \citenamefont [1]{#1}%
\providecommand \href@noop [0]{\@secondoftwo}%
\providecommand \href [0]{\begingroup \@sanitize@url \@href}%
\providecommand \@href[1]{\@@startlink{#1}\@@href}%
\providecommand \@@href[1]{\endgroup#1\@@endlink}%
\providecommand \@sanitize@url [0]{\catcode `\\12\catcode `\$12\catcode
  `\&12\catcode `\#12\catcode `\^12\catcode `\_12\catcode `\%12\relax}%
\providecommand \@@startlink[1]{}%
\providecommand \@@endlink[0]{}%
\providecommand \url  [0]{\begingroup\@sanitize@url \@url }%
\providecommand \@url [1]{\endgroup\@href {#1}{\urlprefix }}%
\providecommand \urlprefix  [0]{URL }%
\providecommand \Eprint [0]{\href }%
\providecommand \doibase [0]{http://dx.doi.org/}%
\providecommand \selectlanguage [0]{\@gobble}%
\providecommand \bibinfo  [0]{\@secondoftwo}%
\providecommand \bibfield  [0]{\@secondoftwo}%
\providecommand \translation [1]{[#1]}%
\providecommand \BibitemOpen [0]{}%
\providecommand \bibitemStop [0]{}%
\providecommand \bibitemNoStop [0]{.\EOS\space}%
\providecommand \EOS [0]{\spacefactor3000\relax}%
\providecommand \BibitemShut  [1]{\csname bibitem#1\endcsname}%
\let\auto@bib@innerbib\@empty
\bibitem [{\citenamefont {Kribs}(2003)}]{Kribs03}%
  \BibitemOpen
  \bibfield  {author} {\bibinfo {author} {\bibfnamefont {D.~W.}\ \bibnamefont
  {Kribs}},\ }\bibfield  {title} {\enquote {\bibinfo {title} {Quantum channels,
  wavelets, dilations and representations of {$\mathcal O_n$}},}\ }\href@noop
  {} {\bibfield  {journal} {\bibinfo  {journal} {Proc. Edin. Math. Soc.}\
  }\textbf {\bibinfo {volume} {46}},\ \bibinfo {pages} {421--433} (\bibinfo
  {year} {2003})}\BibitemShut {NoStop}%
\bibitem [{\citenamefont {Arias}\ \emph {et~al.}(2002)\citenamefont {Arias},
  \citenamefont {Gheondea},\ and\ \citenamefont {Gudder}}]{AriasJMP02}%
  \BibitemOpen
  \bibfield  {author} {\bibinfo {author} {\bibfnamefont {A.}~\bibnamefont
  {Arias}}, \bibinfo {author} {\bibfnamefont {A.}~\bibnamefont {Gheondea}}, \
  and\ \bibinfo {author} {\bibfnamefont {S.}~\bibnamefont {Gudder}},\
  }\bibfield  {title} {\enquote {\bibinfo {title} {Fixed points of quantum
  operations},}\ }\href {\doibase DOI:10.1063/1.1519669} {\bibfield  {journal}
  {\bibinfo  {journal} {J. Math. Phys.}\ }\textbf {\bibinfo {volume} {43}},\
  \bibinfo {pages} {5872 -- 5881} (\bibinfo {year} {2002})}\BibitemShut
  {NoStop}%
\bibitem [{\citenamefont {Du}\ \emph {et~al.}(2008)\citenamefont {Du},
  \citenamefont {Wang},\ and\ \citenamefont {Xu}}]{DuJMP08}%
  \BibitemOpen
  \bibfield  {author} {\bibinfo {author} {\bibfnamefont {Hong-Ke}\ \bibnamefont
  {Du}}, \bibinfo {author} {\bibfnamefont {Yue-Qing}\ \bibnamefont {Wang}}, \
  and\ \bibinfo {author} {\bibfnamefont {Jun-Lian}\ \bibnamefont {Xu}},\
  }\bibfield  {title} {\enquote {\bibinfo {title} {Applications of the
  generalized {L\"{u}}ders theorem},}\ }\href {\doibase DOI:10.1063/1.2827468}
  {\bibfield  {journal} {\bibinfo  {journal} {J. Math. Phys.}\ }\textbf
  {\bibinfo {volume} {49}},\ \bibinfo {pages} {013507:1--6} (\bibinfo {year}
  {2008})}\BibitemShut {NoStop}%
\bibitem [{\citenamefont {Liu}\ and\ \citenamefont {Wu}(2009)}]{LiuJMP09}%
  \BibitemOpen
  \bibfield  {author} {\bibinfo {author} {\bibfnamefont {Weihua}\ \bibnamefont
  {Liu}}\ and\ \bibinfo {author} {\bibfnamefont {Junde}\ \bibnamefont {Wu}},\
  }\bibfield  {title} {\enquote {\bibinfo {title} {On fixed points of
  {L\"{u}}ders operation},}\ }\href {\doibase DOI:10.1063/1.3253574} {\bibfield
   {journal} {\bibinfo  {journal} {J. Math. Phys.}\ }\textbf {\bibinfo {volume}
  {50}},\ \bibinfo {pages} {103531:1--2} (\bibinfo {year} {2009})}\BibitemShut
  {NoStop}%
\bibitem [{\citenamefont {Liu}\ and\ \citenamefont {Wu}(2010)}]{LiuJPA10}%
  \BibitemOpen
  \bibfield  {author} {\bibinfo {author} {\bibfnamefont {Weihua}\ \bibnamefont
  {Liu}}\ and\ \bibinfo {author} {\bibfnamefont {Junde}\ \bibnamefont {Wu}},\
  }\bibfield  {title} {\enquote {\bibinfo {title} {Fixed points of commutative
  {L\"{u}}ders operations},}\ }\href
  {http://stacks.iop.org/1751-8121/43/i=39/a=395206} {\bibfield  {journal}
  {\bibinfo  {journal} {J. Phys. A: Math. Theor.}\ }\textbf {\bibinfo {volume}
  {43}},\ \bibinfo {pages} {395206:1--9} (\bibinfo {year} {2010})}\BibitemShut
  {NoStop}%
\bibitem [{\citenamefont {Li}(2011{\natexlab{a}})}]{LiJMAA11}%
  \BibitemOpen
  \bibfield  {author} {\bibinfo {author} {\bibfnamefont {Yuan}\ \bibnamefont
  {Li}},\ }\bibfield  {title} {\enquote {\bibinfo {title} {Fixed points of dual
  quantum operations},}\ }\href {\doibase 10.1016/j.jmaa.2011.04.047}
  {\bibfield  {journal} {\bibinfo  {journal} {J. Math. Anal. and Appl.}\
  }\textbf {\bibinfo {volume} {382}},\ \bibinfo {pages} {172 -- 179} (\bibinfo
  {year} {2011}{\natexlab{a}})}\BibitemShut {NoStop}%
\bibitem [{\citenamefont {Long}\ and\ \citenamefont {Zhang}(2011)}]{LongJPA11}%
  \BibitemOpen
  \bibfield  {author} {\bibinfo {author} {\bibfnamefont {Long}\ \bibnamefont
  {Long}}\ and\ \bibinfo {author} {\bibfnamefont {Shifang}\ \bibnamefont
  {Zhang}},\ }\bibfield  {title} {\enquote {\bibinfo {title} {Fixed points of
  commutative super-operators},}\ }\href
  {http://stacks.iop.org/1751-8121/44/i=9/a=095201} {\bibfield  {journal}
  {\bibinfo  {journal} {J. Phys. A: Math. Theor.}\ }\textbf {\bibinfo {volume}
  {44}},\ \bibinfo {pages} {095201:1--10} (\bibinfo {year} {2011})}\BibitemShut
  {NoStop}%
\bibitem [{\citenamefont {Li}(2011{\natexlab{b}})}]{LiJMP11}%
  \BibitemOpen
  \bibfield  {author} {\bibinfo {author} {\bibfnamefont {Yuan}\ \bibnamefont
  {Li}},\ }\bibfield  {title} {\enquote {\bibinfo {title} {Characterizations of
  fixed points of quantum operations},}\ }\href {\doibase 10.1063/1.3583541}
  {\bibfield  {journal} {\bibinfo  {journal} {J. Math. Phys.}\ }\textbf
  {\bibinfo {volume} {52}},\ \bibinfo {pages} {052103:1--9} (\bibinfo {year}
  {2011}{\natexlab{b}})}\BibitemShut {NoStop}%
\bibitem [{\citenamefont {Prunaru}(2011)}]{PrunaruJPA11}%
  \BibitemOpen
  \bibfield  {author} {\bibinfo {author} {\bibfnamefont {Bebe}\ \bibnamefont
  {Prunaru}},\ }\bibfield  {title} {\enquote {\bibinfo {title} {Fixed points
  for {L\"{u}}ders operations and commutators},}\ }\href
  {http://stacks.iop.org/1751-8121/44/i=18/a=185203} {\bibfield  {journal}
  {\bibinfo  {journal} {J. Phys. A: Math. Theor.}\ }\textbf {\bibinfo {volume}
  {44}},\ \bibinfo {pages} {185203:1--2} (\bibinfo {year} {2011})}\BibitemShut
  {NoStop}%
\bibitem [{\citenamefont {Long}(2011)}]{LongJPTP11}%
  \BibitemOpen
  \bibfield  {author} {\bibinfo {author} {\bibfnamefont {Long}\ \bibnamefont
  {Long}},\ }\bibfield  {title} {\enquote {\bibinfo {title} {A class of general
  quantum operations},}\ }\href@noop {} {\bibfield  {journal} {\bibinfo
  {journal} {Int. J. Theor. Phys.}\ }\textbf {\bibinfo {volume} {50}},\
  \bibinfo {pages} {1319 -- 1324} (\bibinfo {year} {2011})}\BibitemShut
  {NoStop}%
\bibitem [{\citenamefont {St{\o}rmer}(2007)}]{St07}%
  \BibitemOpen
  \bibfield  {author} {\bibinfo {author} {\bibfnamefont {E.}~\bibnamefont
  {St{\o}rmer}},\ }\bibfield  {title} {\enquote {\bibinfo {title} {A reduction
  theorem for capacity of positive maps},}\ }\href@noop {} {\bibfield
  {journal} {\bibinfo  {journal} {Positivity}\ }\textbf {\bibinfo {volume}
  {11}},\ \bibinfo {pages} {69--75} (\bibinfo {year} {2007})}\BibitemShut
  {NoStop}%
\bibitem [{\citenamefont {Blume-Kohout}\ \emph {et~al.}(2008)\citenamefont
  {Blume-Kohout}, \citenamefont {Ng}, \citenamefont {Poulin},\ and\
  \citenamefont {Viola}}]{BK08}%
  \BibitemOpen
  \bibfield  {author} {\bibinfo {author} {\bibfnamefont {R.}~\bibnamefont
  {Blume-Kohout}}, \bibinfo {author} {\bibfnamefont {H.~K.}\ \bibnamefont
  {Ng}}, \bibinfo {author} {\bibfnamefont {D.}~\bibnamefont {Poulin}}, \ and\
  \bibinfo {author} {\bibfnamefont {L.}~\bibnamefont {Viola}},\ }\bibfield
  {title} {\enquote {\bibinfo {title} {Characterizing the structure of
  preserved information in quantum processes},}\ }\href@noop {} {\bibfield
  {journal} {\bibinfo  {journal} {Phys. Rev. Lett}\ }\textbf {\bibinfo {volume}
  {100}},\ \bibinfo {pages} {030501:1--4} (\bibinfo {year} {2008})}\BibitemShut
  {NoStop}%
\bibitem [{\citenamefont {Blume-Kohout}\ \emph {et~al.}(2010)\citenamefont
  {Blume-Kohout}, \citenamefont {Ng}, \citenamefont {Poulin},\ and\
  \citenamefont {Viola}}]{BK10}%
  \BibitemOpen
  \bibfield  {author} {\bibinfo {author} {\bibfnamefont {R.}~\bibnamefont
  {Blume-Kohout}}, \bibinfo {author} {\bibfnamefont {H.~K.}\ \bibnamefont
  {Ng}}, \bibinfo {author} {\bibfnamefont {D.}~\bibnamefont {Poulin}}, \ and\
  \bibinfo {author} {\bibfnamefont {L.}~\bibnamefont {Viola}},\ }\bibfield
  {title} {\enquote {\bibinfo {title} {Information-preserving structures: {A}
  general framework for quantum zero-error information},}\ }\href@noop {}
  {\bibfield  {journal} {\bibinfo  {journal} {Phys. Rev. A}\ }\textbf {\bibinfo
  {volume} {82}},\ \bibinfo {pages} {062306:1--25} (\bibinfo {year}
  {2010})}\BibitemShut {NoStop}%
\bibitem [{\citenamefont {Rosmanis}(2012)}]{Rosmanis11}%
  \BibitemOpen
  \bibfield  {author} {\bibinfo {author} {\bibfnamefont {A.}~\bibnamefont
  {Rosmanis}},\ }\bibfield  {title} {\enquote {\bibinfo {title} {Fixed space of
  positive trace-preserving super-operators},}\ }\href@noop {} {\bibfield
  {journal} {\bibinfo  {journal} {Lin. Alg. Appl.}\ }\textbf {\bibinfo {volume}
  {437}},\ \bibinfo {pages} {1704--1721} (\bibinfo {year} {2012})}\BibitemShut
  {NoStop}%
\bibitem [{\citenamefont {Mendl}\ and\ \citenamefont {Wolf}(2009)}]{MendlWolf}%
  \BibitemOpen
  \bibfield  {author} {\bibinfo {author} {\bibfnamefont {C.~B.}\ \bibnamefont
  {Mendl}}\ and\ \bibinfo {author} {\bibfnamefont {M.~M.}\ \bibnamefont
  {Wolf}},\ }\bibfield  {title} {\enquote {\bibinfo {title} {Unital quantum
  channels --- convex structure and revivals of {B}irkhoff's theorem},}\
  }\href@noop {} {\bibfield  {journal} {\bibinfo  {journal} {Comm. Math.
  Phys.}\ }\textbf {\bibinfo {volume} {289}},\ \bibinfo {pages} {1057--1086}
  (\bibinfo {year} {2009})}\BibitemShut {NoStop}%
\bibitem [{\citenamefont {Zhang}\ and\ \citenamefont {Wu}(2011)}]{LW11}%
  \BibitemOpen
  \bibfield  {author} {\bibinfo {author} {\bibfnamefont {L.}~\bibnamefont
  {Zhang}}\ and\ \bibinfo {author} {\bibfnamefont {J.}~\bibnamefont {Wu}},\
  }\bibfield  {title} {\enquote {\bibinfo {title} {{v}on~{N}eumann
  entropy-preserving quantum operations},}\ }\href@noop {} {\bibfield
  {journal} {\bibinfo  {journal} {Phys. Lett. A}\ }\textbf {\bibinfo {volume}
  {375}},\ \bibinfo {pages} {4163 -- 4165} (\bibinfo {year}
  {2011})}\BibitemShut {NoStop}%
\bibitem [{\citenamefont {Nielsen}\ and\ \citenamefont {Chuang}(2000)}]{MC}%
  \BibitemOpen
  \bibfield  {author} {\bibinfo {author} {\bibfnamefont {M.~A.}\ \bibnamefont
  {Nielsen}}\ and\ \bibinfo {author} {\bibfnamefont {I.~L.}\ \bibnamefont
  {Chuang}},\ }\enquote {\bibinfo {title} {Quantum {C}omputation {A}nd
  {Q}uantum {I}nformation},}\ \ (\bibinfo  {publisher} {C.U.P.},\ \bibinfo
  {address} {Cambridge, U.K.},\ \bibinfo {year} {2000})\ pp.\ \bibinfo {pages}
  {372 -- 373},\ \bibinfo {edition} {1st}\ ed.,\ \bibinfo {note}
  {{T}heorem~8.2}\BibitemShut {NoStop}%
\bibitem [{\citenamefont {Jungnickel}(2005)}]{Graph05}%
  \BibitemOpen
  \bibfield  {author} {\bibinfo {author} {\bibfnamefont {D.}~\bibnamefont
  {Jungnickel}},\ }\enquote {\bibinfo {title} {Graphs, {N}etworks and
  {A}lgorithms},}\ \ (\bibinfo  {publisher} {Springer},\ \bibinfo {address}
  {Berlin},\ \bibinfo {year} {2005})\ \bibinfo {edition} {2nd}\ ed.,\ \bibinfo
  {note} {{C}hap.\ 8.2, in particular, {A}lgorithm~8.2.1 and
  {T}heorem~8.2.2}\BibitemShut {NoStop}%
\bibitem [{\citenamefont {Fukuda}\ and\ \citenamefont {Wolf}(2007)}]{FW07}%
  \BibitemOpen
  \bibfield  {author} {\bibinfo {author} {\bibfnamefont {M.}~\bibnamefont
  {Fukuda}}\ and\ \bibinfo {author} {\bibfnamefont {M.~M.}\ \bibnamefont
  {Wolf}},\ }\bibfield  {title} {\enquote {\bibinfo {title} {Simplifying
  additivity problems using direct sum constructions},}\ }\href@noop {}
  {\bibfield  {journal} {\bibinfo  {journal} {J. Math. Phys.}\ }\textbf
  {\bibinfo {volume} {48}},\ \bibinfo {pages} {072101:1--7} (\bibinfo {year}
  {2007})}\BibitemShut {NoStop}%
\bibitem [{\citenamefont {Maehara}\ and\ \citenamefont {Murota}(2011)}]{Jap3}%
  \BibitemOpen
  \bibfield  {author} {\bibinfo {author} {\bibfnamefont {T.}~\bibnamefont
  {Maehara}}\ and\ \bibinfo {author} {\bibfnamefont {K.}~\bibnamefont
  {Murota}},\ }\bibfield  {title} {\enquote {\bibinfo {title} {Algorithm for
  error-controlled simultaneous block-diagonalization of matrices},}\
  }\href@noop {} {\bibfield  {journal} {\bibinfo  {journal} {SIAM J. Matrix
  Anal. Appl.}\ }\textbf {\bibinfo {volume} {32}},\ \bibinfo {pages} {605 --
  620} (\bibinfo {year} {2011})}\BibitemShut {NoStop}%
\bibitem [{\citenamefont {Murota}\ \emph {et~al.}(2010)\citenamefont {Murota},
  \citenamefont {Kanno}, \citenamefont {Kojima},\ and\ \citenamefont
  {Kojima}}]{Jap1}%
  \BibitemOpen
  \bibfield  {author} {\bibinfo {author} {\bibfnamefont {K.}~\bibnamefont
  {Murota}}, \bibinfo {author} {\bibfnamefont {Y.}~\bibnamefont {Kanno}},
  \bibinfo {author} {\bibfnamefont {M.}~\bibnamefont {Kojima}}, \ and\ \bibinfo
  {author} {\bibfnamefont {S.}~\bibnamefont {Kojima}},\ }\bibfield  {title}
  {\enquote {\bibinfo {title} {A numerical algorithm for block-diagonal
  decomposition of matrix $\star$-algebras with application to semidefinite
  programming},}\ }\href@noop {} {\bibfield  {journal} {\bibinfo  {journal}
  {Japan J. Indust. Appl. Math.}\ }\textbf {\bibinfo {volume} {27}},\ \bibinfo
  {pages} {125--160} (\bibinfo {year} {2010})}\BibitemShut {NoStop}%
\bibitem [{\citenamefont {Maehara}\ and\ \citenamefont {Murota}(2010)}]{Jap2}%
  \BibitemOpen
  \bibfield  {author} {\bibinfo {author} {\bibfnamefont {T.}~\bibnamefont
  {Maehara}}\ and\ \bibinfo {author} {\bibfnamefont {K.}~\bibnamefont
  {Murota}},\ }\bibfield  {title} {\enquote {\bibinfo {title} {A numerical
  algorithm for block-diagonal decomposition of matrix $\star$-algebras with
  general irreducible components},}\ }\href@noop {} {\bibfield  {journal}
  {\bibinfo  {journal} {Japan J. Indust. Appl. Math.}\ }\textbf {\bibinfo
  {volume} {27}},\ \bibinfo {pages} {263--293} (\bibinfo {year}
  {2010})}\BibitemShut {NoStop}%
\end{thebibliography}%
\end{document}